\newtheorem{thm}{Theorem}
\newtheorem{prop}{Proposition}
\providecommand{\ketbra}[1]{\ket{#1}\bra{#1}}
\newcommand{\ui}{\mathrm{i}}
\newcommand{\ue}{\mathrm{e}}
\begin{document}

%
%
%
%
%
%
%
%
%
%

\title{Genuine Multipartite Nonlocality Is Intrinsic to Quantum Networks}

\author{Patricia Contreras-Tejada}

\affiliation{Instituto de Ciencias Matemáticas, E-28049 Madrid, Spain}

\author{Carlos Palazuelos}

\affiliation{Departamento de Análisis Matemático y Matemática Aplicada, Universidad
Complutense de Madrid, E-28040 Madrid, Spain}

\affiliation{Instituto de Ciencias Matemáticas, E-28049 Madrid, Spain}

\author{Julio I. de Vicente}

\affiliation{Departamento de Matemáticas, Universidad Carlos III de Madrid, E-28911,
Leganés (Madrid), Spain}
\begin{abstract}
Quantum entanglement and nonlocality are inextricably linked. However,
while entanglement is necessary for nonlocality, it is not always
sufficient in the standard Bell scenario. We derive sufficient conditions
for entanglement to give rise to genuine multipartite nonlocality
in networks. We find that any network where the parties are connected
by bipartite pure entangled states is genuine multipartite nonlocal,
independently of the amount of entanglement in the shared states and
of the topology of the network. As an application of this result,
we also show that all pure genuine multipartite entangled states are
genuine multipartite nonlocal in the sense that measurements can be
found on finitely many copies of any genuine multipartite entangled
state to yield a genuine multipartite nonlocal behaviour. Our results
pave the way towards feasible manners of generating genuine multipartite
nonlocality using any connected network.
\end{abstract}
\maketitle
Correlations between quantum particles may be much stronger than those
between classical particles. Their applications are manifold:
cryptography \cite{gisin_quantum_2002,pirandola_advances_2020}, randomness
extraction, amplification and certification \cite{acin_certified_2016},
communication complexity reduction \cite{buhrman_nonlocality_2010},
etc., and the study of these \emph{nonlocal} correlations has led
to the growing field of device-independent quantum information processing
\cite{mayers_quantum_1998,acin_device-independent_2007,colbeck_quantum_2011}
(see also Ref. \cite{brunner_bell_2014}).

While bipartite nonlocality has been well researched in the past three decades, much less is known about the multipartite case. Still, correlations in quantum multicomponent systems have gained increasing attention recently, with applications in multiparty
cryptography \cite{aolita_fully_2012}, the understanding of condensed matter physics \cite{tura_detecting_2014,tura_energy_2017}, and  the development of quantum networks \cite{cavalcanti_quantum_2011,gisin_all_2017,supic_measurement-device-independent_2017,tavakoli_correlations_2017,renou_genuine_2019,
gisin_constraints_2020,krivachy_neural_2020}, particularly for quantum computation \cite{cirac_distributed_1999-1,howard_nonlocality_2012,howard_contextuality_2014} and correlating particles which never interacted \cite{branciard_characterizing_2010,branciard_bilocal_2012}.

A necessary condition for nonlocality is quantum entanglement.
Indeed, this is one reason why entangled states are useful
for communication-related tasks. However, not all entangled states
are nonlocal: some bipartite entangled states only yield local distributions \cite{werner_quantum_1989,barrett_nonsequential_2002}.
Still, for pure bipartite states, entanglement \textit{is} sufficient
for nonlocality, which is the content of Gisin's theorem \cite{gisin_bells_1991,gisin_maximal_1992}, and multipartite entangled pure states are never fully local \cite{popescu_generic_1992,gachechiladze_completing_2017}.  Interestingly, distributing certain bipartite entangled states
in certain multipartite networks yields  nonlocality even if the involved states are individually local \cite{sende_entanglement_2005,cavalcanti_quantum_2011,cavalcanti_nonlocality_2012,supic_measurement-device-independent_2017,luo_nonlocality_2018,luo_nonlocal_2019}.

Multipartite nonlocality is in principle harder to generate than bipartite nonlocality. By exploring the relationship between entanglement and nonlocality in the multipartite regime, in this Letter we show that networks simplify the job considerably: distributing arbitrarily low node-to-node entanglement is sufficient to observe truly multipartite nonlocal effects involving all parties in the network independently of its geometry. Added to its practical consequences for applications, this fact points to a deep property of quantum networks.

The multipartite setting has a richer structure than the
bipartite one, as different forms of entanglement and nonlocality
can be identified. Full separability (full locality) refers to systems
that do not display any form of entanglement (locality) whatsoever. However,
falsifying these models need not imply truly multipartite
quantum correlations since spreading them among two parties is sufficient.
Hence, a \textit{genuine multipartite} notion which inextricably
relates all parties together is more often considered. Here, a state
is genuine multipartite entangled (GME) if it is not a tensor product
of states of two subsets of parties, $M$ and its complement $\overline{M},$
i.e. of the form $\ket{\psi}=\ket{\psi_{M}}\otimes\ket{\psi_{\overline{M}}},$
or a convex combination of such states $\ketbra{\psi}$ across all
bipartitions. Analogously, a probability distribution $\left\{ P(\alpha_{1}\alpha_{2}...\alpha_{n}|\chi_{1}\chi_{2}...\chi_{n})\right\} _{\alpha_{1},...,\alpha_{n},\chi_{1},...,\chi_{n}}$
(with input $\chi_{i}$ and output $\alpha_{i}$ for party $i$) which
is not of the form 
\begin{equation}
\begin{aligned}P(\alpha_{1}\alpha_{2}...\alpha_{n} & |\chi_{1}\chi_{2}...\chi_{n})\\
=\sum_{M\subsetneq[n]}\sum_{\lambda} & q_{M}(\lambda)P_{M}(\{\alpha_{i}\}_{i\in M}|\{\chi_{i}\}_{i\in M}, \lambda)\\
 & \times P_{\overline{M}}(\{\alpha_{i}\}_{i\in\overline{M}}|\{\chi_{i}\}_{i\in\overline{M}}, \lambda),
\end{aligned}
\label{eq:BLdistrNpartite}
\end{equation}
where $q_{M}(\lambda)\geq0\,\forall\lambda,M,$ $\sum_{\lambda,M}q_{M}(\lambda)=1$
and $[n]:=\{1,...,n\},$ is genuine multipartite nonlocal (GMNL) \cite{svetlichny_distinguishing_1987,seevinck_bell-type_2002,collins_bell-type_2002},
and a state is GMNL if measurements giving rise to a GMNL
distribution exist.
The original definition \cite{svetlichny_distinguishing_1987} leaves the distributions $P_M, P_{\overline M}$ in equation (1) unrestricted; however, this has been shown to lead to operational problems \cite{buscemi_all_2012,gallego_operational_2012,bancal_definitions_2013,
geller_quantifying_2014,vicente_nonlocality_2014,
gallego_nonlocality_2017}.
Hence, like most recent works on the topic, we assume these distributions are nonsignalling, which captures most physical situations better
\cite{schmid_type-independent_2020,wolfe_quantifying_2020}. This means the marginal distributions for any subset of parties are independent of the inputs outside this subset, which is guaranteed if this holds for the marginals corresponding to ignoring just one party \cite{hanggi_device-independent_2010}:
\begin{equation}
\begin{aligned}
\sum_{\alpha_j} P_M (\{\alpha_i\}_{i\in M, i\neq j}, \alpha_j | \{\chi_i\}_{i\in M, i\neq j}, \chi_j, \lambda)=\\
\sum_{\alpha_j} P_M (\{\alpha_i\}_{i\in M, i\neq j}, \alpha_j | \{\chi_i\}_{i\in M, i\neq j}, \chi^{\prime}_j, \lambda),
\end{aligned}
\end{equation}
for all $\lambda, \chi_j\neq \chi^{\prime}_j$ and all parties $j,$ and similarly for $P_{\overline{M}}$.

In this Letter we show that the nonlocality arising from networks
of bipartite pure entangled states is a generic property and manifests
in its strongest form, GMNL. Specifically, we obtain that any connected
network of bipartite pure entangled states is GMNL. It was already
known that a star network of maximally entangled states is GMNL \cite{cavalcanti_quantum_2011},
but we provide a full, qualitative generalisation of this result by making it independent of both the amount of entanglement shared and the network topology. Thus, we show GMNL is an intrinsic property of networks of pure bipartite entangled states.

Further, there are known mixed GME states that are not GMNL \cite{augusiak_entanglement_2015,augusiak_constructions_2018}---some
are even fully local \cite{bowles_genuinely_2016}. Still, it is not known whether Gisin\textquoteright s theorem extends to
the genuine multipartite regime. Recent results
show that, for pure $n$-qubit symmetric states \cite{chen_test_2014}
and all pure 3-qubit states \cite{yu_tripartite_2013}, GME implies
GMNL (at the single-copy level) \footnote{Hidden GMNL for three parties beyond qubits can be shown if some form of preprocessing is allowed.}. Our result above shows that all pure
GME states that have a network structure are GMNL; interestingly,
we further apply this property to establish a second result: all pure GME states are GMNL in the sense that measurements
can be found on finitely many copies of any GME state to yield a GMNL
behaviour. We thus tighten the relationship between multipartite entanglement and nonlocality.

Our construction exploits the fact that the set of non-GME states
is not closed under tensor products, i.e. GME can be superactivated
by taking tensor products of states that are unentangled across different
bipartitions. Thus, GME can be achieved by distributing bipartite
entangled states among different pairs of parties. To obtain our results,
we extend the superactivation property \cite{navascues_activation_2011,palazuelos_super-activation_2012,caban_activation_2015}
from the level of states to that of probability distributions, i.e.
GMNL can be superactivated by taking Cartesian products of probability
distributions that are local across different bipartitions. In fact,
when considering copies of quantum states, we only consider local
measurements performed on each copy separately, thus pointing at a
stronger notion of superactivation to achieve GMNL.

\paragraph{Definitions and preliminaries}

We consider distributions arising from GME states, and ask whether
they satisfy (\ref{eq:BLdistrNpartite}). The set of
distributions of the form (\ref{eq:BLdistrNpartite}) is a polytope: indeed, the set of local distributions
across each bipartition $M|\overline{M}$ is a polytope, and
convex combinations preserve that structure. We call this $n$-partite
polytope $\mathcal{B}_{n}.$ We call an inequality 
\begin{equation}
\sum_{\substack{\alpha_{i}\chi_{i}\\
i\in[n]
}
}c_{\alpha_{1}...\alpha_{n}\chi_{1},...,\chi_{n}}P(\alpha_{1}...\alpha_{n}|\chi_{1}...\chi_{n})\leq c_{0}
\end{equation}
which holds for all $P$ of the form (\ref{eq:BLdistrNpartite})
a GMNL inequality. 

We use results from Ref. \cite{pironio_lifting_2005} to lift inequalities
to account for more parties, inputs and outputs. They consider the fully local polytope $\mathcal{L}$,
which only includes distributions
\begin{equation}
P(\alpha\beta|\chi\upsilon)=\sum_{\lambda}q(\lambda)P_{A}(\alpha|\chi,\lambda)P_{B}(\beta|\upsilon,\lambda)\label{eq:FLdistr}
\end{equation}
where each party may have different numbers of inputs and outputs
(more parties may be considered by adding more distributions correlated
only by $\lambda$). Polytope $\mathcal{B}_{n}$ includes convex combinations of distributions that are local across different bipartitions $M|\overline{M}$ of the parties, but the lifting
results in \cite{pironio_lifting_2005} still hold. Indeed, to check an inequality holds for a polytope, it is sufficient by convexity
to check the extremal points. As all extremal points
in $\mathcal{B}_{n}$ are contained in some polytope $\mathcal{L}$,
lifting results for $\mathcal{L}$ can be straightforwardly extended
to $\mathcal{B}_{n}$.

We also use the EPR2 decomposition \cite{elitzur_quantum_1992} and
its multipartite extension \cite{almeida_multipartite_2010}: any
distribution $P$ can be expressed (nonuniquely) as
\begin{equation}
\begin{aligned}P(\alpha_{1}... \alpha_{n}|\chi_{1}...\chi_{n})
=&\sum_{M\subsetneq[n]}  p_{L}^{M}P_{L}^{M}(\alpha_{1}...\alpha_{n}|\chi_{1}...\chi_{n})\\
 & +p_{NS}P_{NS}(\alpha_{1}...\alpha_{n}|\chi_{1}...\chi_{n})
\end{aligned}
\label{eq:EPR2-multipartite}
\end{equation}
where $\sum_{M\subsetneq[n]}p_{L}^{M}+p_{NS}=1,$ $P_{L}^{M}$ is
local across $M|\overline{M}$ (i.e. satisfies equation (\ref{eq:FLdistr})
with parties grouped as per $M|\overline{M}$), and $P_{NS}$ is nonsignalling.
$P$ is GMNL if all such decompositions have $p_{NS}>0,$ and fully GMNL
if all such decompositions have $p_{NS}=1.$ A state $\rho$ is fully GMNL
if, $\forall \varepsilon>0,$ there exist local measurements giving rise to some $P$ such that any decomposition (\ref{eq:EPR2-multipartite})
has $p_{NS}>1-\varepsilon.$ Bipartite distributions and states may
be nonlocal or fully nonlocal \footnote{Not to be confused with ``nonfully local'', which is the opposite
of ``fully local''. ``fully nonlocal'' is a particular case of
``nonfully local''.} analogously.

\paragraph{GMNL from bipartite entanglement}

Our first result shows that any connected network of pure bipartite
entanglement (see Figure \ref{fig:GMNL-from-bipartite}) is GMNL.

\begin{figure}
\includegraphics[width=0.5\textwidth]{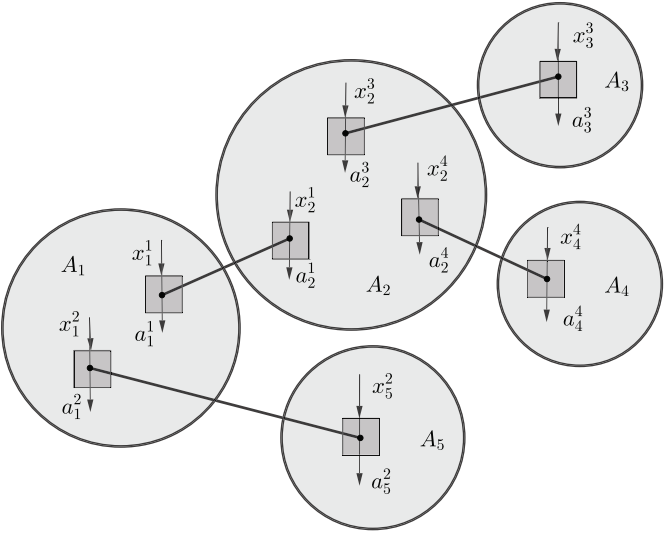}

\caption{Connected network of bipartite entanglement. For each $i\in[n],$
party $A_{i}$ has input $x_{i}^{k}$ and output $a_{i}^{k}$ on the
particle at edge $k.$ Particles connected by an edge are entangled. 
\label{fig:GMNL-from-bipartite}}
\end{figure}

\begin{thm}
\label{thm:gmnl-from-bipartite-ent} Any connected network of bipartite
pure entangled states is GMNL.
\end{thm}

We now outline the proof for a tripartite network
where $A_{1}$ is entangled to each of $A_{2}$ and $A_{3}$, and
leave the general case to \cite{supmat}. Since it turns out sufficient to measure individually on each party's different
particles (see Figure \ref{fig:GMNL-from-bipartite} for the $n$-partite
structure), the shared distribution $P(a_{1}^{1}a_{1}^{2},a_{2}^{1},a_{3}^{2}|x_{1}^{1}x_{1}^{2},x_{2}^{1},x_{3}^{2})$
takes the form
\begin{equation}
P_{1}(a_{1}^{1}a_{2}^{1}|x_{1}^{1}x_{2}^{1})P_{2}(a_{1}^{2}a_{3}^{2}|x_{1}^{2}x_{3}^{2})\label{eq:P3partite}
\end{equation}
where parties $A_{i},A_{j}$ are connected by edge $k$ (we label vertices and edges independently), and $P_{k}(a_{i}^{k}a_{j}^{k}|x_{i}^{k}x_{j}^{k})$
is the distribution arising from the state at edge $k.$

The proof considers three cases, depending on whether the shared states are maximally entangled.
If none are, we devise inequalities to detect bipartite nonlocality at each edge of the network, and combine them to form a multipartite inequality. Then, we find measurements on the shared states to violate it. If both states are maximally entangled, existing results show the network is fully GMNL \cite{cavalcanti_quantum_2011,almeida_multipartite_2010}. Combining these two cases for a heterogeneous network completes the proof.

To prove the first case, we take bipartite inequalities between $A_1$ and each other party, lift them to three parties and combine them using Refs. \cite{pironio_lifting_2005,curchod_versatile_2019}, to obtain the following GMNL inequality:
\begin{equation}
\begin{aligned}I_{3} & =I^{1}+I^{2}+P(00,0,0|00,0,0)\\
 & -\sum_{a_{1}^{2}=0,1}P(0a_{1}^{2},0,0|00,0,0)\\
 & -\sum_{a_{1}^{1}=0,1}P(a_{1}^{1}0,0,0|00,0,0)\leq0\,.
\end{aligned}
\label{eq:Iabc44}
\end{equation}
Here,
\begin{equation}
\begin{aligned}I^{1}= & \sum_{a_{1}^{2}=0,1}\left[P(0a_{1}^{2},0,0|00,0,0)-P(0a_{1}^{2},1,0|00,1,0)\right.\\
- & \left.P(1a_{1}^{2},0,0|10,0,0)-P(0a_{1}^{2},0,0|10,1,0)\right]\leq0;
\end{aligned}
\label{eq:IAB44}
\end{equation}
\begin{equation}
\begin{aligned}I^{2}= & \sum_{a_{1}^{1}=0,1}\left[P(a_{1}^{1}0,0,0|00,0,0)-P(a_{1}^{1}0,0,1|00,0,1)\right.\\
- & \left.P(a_{1}^{1}1,0,0|01,0,0)-P(a_{1}^{1}0,0,0|01,0,1)\right]\leq0
\end{aligned}
\label{eq:IAC44}
\end{equation}
are liftings of 
\begin{equation}
I=P(00|00)-P(01|01)-P(10|10)-P(00|11)\leq0\label{eq:CHSHseed}
\end{equation}
to three parties with $A_{1}$ having 4 inputs and 4 outputs. Inequality
(\ref{eq:CHSHseed}) is equivalent to the CHSH inequality \cite{clauser_proposed_1969}
for nonsignalling distributions \cite{curchod_versatile_2019}. Thus,
inequalities (\ref{eq:IAB44}), (\ref{eq:IAC44}) are satisfied by
distributions that are local across $A_{1}|A_{2}$ and $A_{1}|A_{3}$
respectively. To see that equation (\ref{eq:Iabc44}) is a GMNL inequality
it is sufficient to check it holds for distributions that are local
across some bipartition. This is straightforwardly done by observing
the cancellations that occur when $I^{1}$ or $I^{2}$
are $\leq0.$

Since both states are less-than-maximally entangled, $A_{1}$ can
satisfy Hardy's paradox \cite{hardy_quantum_1992,hardy_nonlocality_1993}
with each other party, achieving
\begin{equation}
P_{k}(00|00)>0=P_{k}(01|01)=P_{k}(10|10)=P_{k}(00|11)
\end{equation}
for both $k$ (the proof for qubits in Refs. \cite{hardy_quantum_1992,hardy_nonlocality_1993}
is extended to qudits by measuring on a two-dimensional subspace,
see Ref. \cite{supmat}). Then, each negative term in $I^{1}$ and $I^{2}$
is zero, as
\begin{equation}
\begin{aligned}\sum_{a_{1}^{2}=0,1}P(0a_{1}^{2},1,0|00,1,0) & =P_{1}(01|01)\sum_{a_{1}^{2}=0,1}P_{2}(a_{1}^{2}0|00)\end{aligned}
\end{equation}
and similarly for the others. Hence, only
\begin{equation}
P(00,0,0|00,0,0)=P_{1}(00|00)P_{2}(00|00)>0
\end{equation}
survives, violating the inequality.

If, instead, $A_{1} A_2$ share a maximally entangled state,
and $A_2 A_3$ share a less-than-maximally entangled state, then $A_{1}A_{3}$
can measure so that $P_{2}$ satisfies Hardy's paradox;
hence $\exists\,\varepsilon>0$ such that its local component
in any EPR2 decomposition satisfies
\begin{equation}
p_{L,2}\leq1-\varepsilon.\label{eq:plocal-Hardy}
\end{equation}
Since the maximally entangled state is fully nonlocal \cite{barrett_maximally_2006},
for this $\varepsilon,$ $A_{1}A_{2}$ can measure such
that any EPR2 decomposition of $P_{1}$ satisfies
\begin{equation}
p_{L,1}<\varepsilon.\label{eq:plocal-maxent}
\end{equation}

Then, we assume for a contradiction that $P(a_{1}^{1}a_{1}^{2},a_{2}^{1},a_{3}^{2}|x_{1}^{1}x_{1}^{2},x_{2}^{1},x_{3}^{2})$
is not GMNL and decompose it in its bipartite splittings,
\begin{equation}
\begin{aligned}P(a_{1}^{1}a_{1}^{2}, & a_{2}^{1},a_{3}^{2}|x_{1}^{1}x_{1}^{2},x_{2}^{1},x_{3}^{2})\\
=\sum_{\lambda} & \left(p_{L}(\lambda)P_{A_{1}A_{2}}(a_{1}^{1}a_{1}^{2},a_{2}^{1}|x_{1}^{1}x_{1}^{2},x_{2}^{1},\lambda)P_{A_{3}}(a_{3}^{2}|x_{3}^{2},\lambda)\right.\\
 & +q_{L}(\lambda)P_{A_{1}A_{3}}(a_{1}^{1}a_{1}^{2},a_{3}^{2}|x_{1}^{1}x_{1}^{2},x_{3}^{2},\lambda)P_{A_{2}}(a_{2}^{1}|x_{2}^{1},\lambda)\\
 & \left.+r_{L}(\lambda)P_{A_{1}}(a_{1}^{1}a_{1}^{2}|x_{1}^{1}x_{1}^{2},\lambda)P_{A_{2}A_{3}}(a_{2}^{1},a_{3}^{2}|x_{2}^{1},x_{3}^{2},\lambda)\right)
\end{aligned}
\label{eq:PmixBL}
\end{equation}
where $\sum_{\lambda}\left[p_{L}(\lambda)+q_{L}(\lambda)+r_{L}(\lambda)\right]=1.$

Summing equation (\ref{eq:PmixBL}) over $a_{1}^{2},a_{3}^{2}$ and
using equation (\ref{eq:P3partite}), we get an EPR2 decomposition
of $P_{1}$ with local components $q_{L},r_{L}.$ By equation (\ref{eq:plocal-maxent}),
this entails $\sum_{\lambda}\left[q_{L}(\lambda)+r_{L}(\lambda)\right]<\varepsilon,$
so 
\begin{equation}
\sum_{\lambda}p_{L}(\lambda)>1-\varepsilon.
\end{equation}

Summing, instead, equation (\ref{eq:PmixBL}) over $a_{1}^{1},a_{2}^{1},$
we obtain an EPR2 decomposition of $P_{2}$ whose only nonnegligible
component, $\sum_{\lambda}p_{L}(\lambda),$ is local in $A_{1}|A_{3},$
contradicting equation (\ref{eq:plocal-Hardy}). Therefore, $P$ must
be GMNL.

\paragraph{GMNL from GME}

By Theorem \ref{thm:gmnl-from-bipartite-ent}, a star network whose central node shares pure entanglement with all others is GMNL.
We now ask whether all GME states are GMNL (i.e.
the genuine multipartite extension of Gisin's theorem). We show $(n-1)$
copies of any pure GME $n$-partite state suffice to generate $n$-partite
GMNL. We do this by generating a distribution from these copies that
mimics the star network configuration.
\begin{figure}
\includegraphics[width=0.5\textwidth]{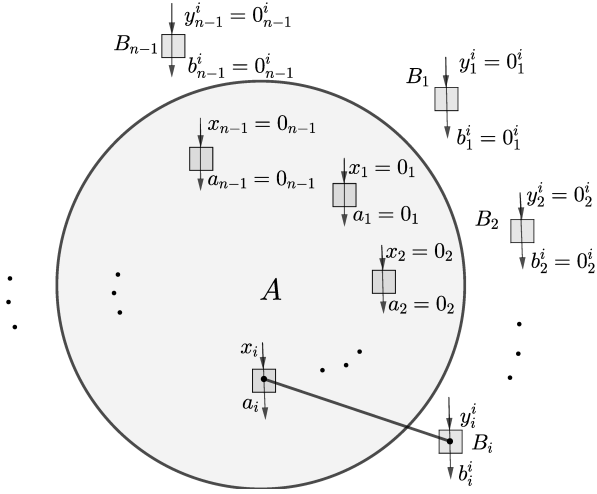}

\caption{Element $i\in[n-1]$ of the star network of bipartite entanglement
created from a GME state $\ket{\Psi}.$ Parties $\{B_{j}\}_{j\in[n-1],j\protect\neq i}$
have already measured $\ket{\Psi}$ and are left unentangled. Alice
and party $B_{i}$ share a pure bipartite entangled state. Alice has
input $x_{i}$ and output $a_{i}$ while each party $B_{j},$ $j\in[n-1],$
has input $y_{j}^{i}$ and output $b_{j}^{i}.$ \label{fig:GMNL-from-GME}}

\end{figure}

\begin{thm}
\label{thm:copies}Any GME state $\ket{\Psi}\in\mathcal{H}_{1}\otimes...\otimes\mathcal{H}_{n}\cong(\mathbb{C}^{d})^{\otimes n}$
is such that $\ket{\Psi}^{\otimes(n-1)}$ is GMNL. 
\end{thm}

The full proof is given in \cite{supmat}, and we presently outline
the tripartite case. Hence, we consider two copies of the state.
For each copy, we derive measurements for Bob1 and Bob2 that leave
Alice bipartitely entangled with Bob2 and Bob1 respectively. This
yields a network as in equation (\ref{eq:P3partite}) but postselected
on the inputs and outputs of these measurements. We generalise
Theorem \ref{thm:gmnl-from-bipartite-ent} to show this network is
also GMNL.

For $i,j=1,2,$ on copy $i,$ $B_{j}$'s measurements have input $y_{j}^{i}$
and output $b_{j}^{i}$ and Alice's measurement has input $x_{i}$
and output $a_{i}.$ We denote $B_{j}$'s inputs and outputs in terms
of their digits as $\upsilon_{j}=y_{j}^{1}y_{j}^{2}$ and $\beta_{j}=b_{j}^{1}b_{j}^{2}.$
Then, after measurement, the parties share a distribution
\begin{equation}
\begin{aligned}P(\alpha & \beta_{1}\beta_{2}|\chi\upsilon_{1}\upsilon_{2})\\
 & =P_{1}(a_{1},b_{1}^{1}b_{2}^{1}|x_{1},y_{1}^{1}y_{2}^{1})P_{2}(a_{2},b_{1}^{2}b_{2}^{2}|x_{2},y_{1}^{2}y_{2}^{2})\,.
\end{aligned}
\label{eq:Pnpartite-fromGME}
\end{equation}

For each $i,j=1,2,$ $i\neq j,$ we assume $B_{j}$ uses input
$0_{j}^{i}$ and output $0_{j}^{i}$ to project the $i$th copy of
$\ket{\Psi}$ onto $\ket{\phi_{i}}_{AB_{i}},$ as shown in Figure
\ref{fig:GMNL-from-GME} for $n$ parties. Then, Refs. \cite{popescu_generic_1992,gachechiladze_completing_2017}
and a continuity argument serve to show we only have two possibilities
for each $i$: either there exists an input and output per party such
that $\ket{\phi_{i}}_{AB_{i}}$ is less-than-maximally entangled,
or there exists an input per party such that, for all outputs, $\ket{\phi_{i}}_{AB_{i}}$
is maximally entangled. In each case we generalise the proof in Theorem
\ref{thm:gmnl-from-bipartite-ent} to show $\ket{\Psi}^{\otimes2}$
is GMNL.

If both $\ket{\phi_{i}}_{AB_{i}},$ $i=1,2$ are less-than-maximally
entangled, we use the following expression, which is a GMNL inequality
by the same reasoning as in Theorem \ref{thm:gmnl-from-bipartite-ent}:
\begin{equation}
\begin{aligned}I_{3} & =\sum_{i=1}^{2}I^{i}+P(00,00,00|00,00,00)\\
 & -\sum_{i=1}^{2}\sum_{\substack{a_{j},b_{i}^{j},\\
b_{j}^{j}=0,1,\\
j\neq i
}
}P(0_{i}a_{j}\,,0_{i}^{i}b_{i}^{j}\,,0_{j}^{i}b_{j}^{j}\,|0_{i}0_{j}\,,0_{i}^{i}0_{i}^{j}\,,0_{j}^{i}0_{j}^{j})\leq0,
\end{aligned}
\end{equation}
where
\begin{equation}
\begin{aligned}I^{i}= & \sum_{\substack{a_{j},b_{i}^{j},b_{j}^{j}=0,1,\\
j\neq i
}
}\left[P(0_{i}a_{j}\,,0_{i}^{i}b_{i}^{j}\,,0_{j}^{i}b_{j}^{j}\,|0_{i}0_{j}\,,0_{i}^{i}0_{i}^{j}\,,0_{j}^{i}0_{j}^{j})\right.\\
 & -P(0_{i}a_{j}\,,1_{i}^{i}b_{i}^{j}\,,0_{j}^{i}b_{j}^{j}\,|0_{i}0_{j}\,,1_{i}^{i}0_{i}^{j}\,,0_{j}^{i}0_{j}^{j})\\
 & -P(1_{i}a_{j}\,,0_{i}^{i}b_{i}^{j}\,,0_{j}^{i}b_{j}^{j}\,|1_{i}0_{j}\,,0_{i}^{i}0_{i}^{j}\,,0_{j}^{i}0_{j}^{j})\\
 & \left.-P(0_{i}a_{j}\,,0_{i}^{i}b_{i}^{j}\,,0_{j}^{i}b_{j}^{j}\,|1_{i}0_{j}\,,1_{i}^{i}0_{i}^{j}\,,0_{j}^{i}0_{j}^{j})\right]\,.
\end{aligned}
\end{equation}

Evaluating the inequality on the distribution (\ref{eq:Pnpartite-fromGME}),
we find again that all negative terms in each $I^{i}$ can be sent
to zero. For each $i$ we get, for example,
\begin{equation}
\begin{aligned}\sum_{\substack{a_{j},b_{i}^{j},b_{j}^{j}\\
=0,1
}
} & P(0_{i}a_{j}\,,1_{i}^{i}b_{i}^{j}\,,0_{j}^{i}b_{j}^{j}\,|0_{i}0_{j}\,,1_{i}^{i}0_{i}^{j}\,,0_{j}^{i}0_{j}^{j})\\
 & =P_{i}(0_{i}1_{i}^{i}0_{j}^{i}|0_{i}1_{i}^{i}0_{j}^{i})
\end{aligned}
\end{equation}
as the sum over $P_{j}$ is 1. But, conditioned on $B_{j}$'s input
and output being $0_{j}^{i},$ parties $AB_{i}$ can measure so
$P_{i}$ satisfies Hardy's paradox, hence this term is zero, and similarly
for the other two negative terms. This means all terms in $I_{3}$
are zero except $P(00,00,00|00,00,00)>0,$ violating the inequality.
Therefore, $\ket{\Psi}^{\otimes2}$ is GMNL.

If, for both $i=1,2,$ there exists a local measurement for party
$B_{j},$ $j\neq i$ such that, for \emph{all }outputs, $\ket{\Psi}$
is projected onto a maximally entangled state $\ket{\phi_{i}}_{AB_{i}},$
then $\ket{\Psi}$ satisfies Theorem 2 in Ref. \cite{almeida_multipartite_2010},
so $\ket{\Psi}$ itself is GMNL. Therefore so is $\ket{\Psi}^{\otimes2}.$

Finally, if $\ket{\phi_{1}}_{AB_{1}}$ is maximally entangled for
all of $B_{2}$'s outputs, and $\ket{\phi_{2}}_{AB_{2}}$ is less-than-maximally
entangled, using Refs. \cite{popescu_generic_1992,almeida_multipartite_2010}
we deduce that the bipartite EPR2 components of
$P_{1,2}$ across $A|B_{1,2}$ respectively are bounded like
in Theorem \ref{thm:gmnl-from-bipartite-ent}. That is,
$\exists\, \varepsilon>0$ such that the local component of any EPR2 decomposition
across $A|B_{2}$ satisfies
\begin{equation}
p_{L,2}^{A|B_{2}}\leq1-\varepsilon\label{eq:pHlocal-copies-3parties}
\end{equation}
and, given this $\varepsilon,$ parties $AB_{1}$ can measure locally
such that all bipartite EPR2 decompositions across $A|B_{1}$ have
a local component
\begin{equation}
p_{L,1}^{A|B_{1}}<\varepsilon.\label{eq:Plocal-maxent-copies-3parties}
\end{equation}

Then, we assume $P(\alpha\beta_{1}\beta_{2}|\chi\upsilon_{1}\upsilon_{2})$
is not GMNL and decompose it in local terms across different bipartitions,
like in equation (\ref{eq:PmixBL}) in Theorem \ref{thm:gmnl-from-bipartite-ent}.
Summing over $a_{2},b_{j}^{2},$ $j=1,2$ gives an EPR2 decomposition
of $P_{1}$ whose local components can be bounded using equation (\ref{eq:Plocal-maxent-copies-3parties}).
Summing over $a_{1},b_{j}^{1},$ $j=1,2$ instead gives an EPR2 decomposition
of $P_{2}.$ But the bound on the local component of $P_{1}$ entails
a bound on that of $P_{2}$ which contradicts equation (\ref{eq:pHlocal-copies-3parties}),
proving $P$ is GMNL.

\paragraph{Conclusions}

We have shown that GMNL can be obtained by distributing arbitrary
pure bipartite entanglement, which paves the way towards feasible
generation of GMNL from any network. In fact, our results
imply that, given a set of nodes, distributing entanglement
in the form of a tree is sufficient to observe GMNL. In practical applications,
the entanglement shared by the nodes would unavoidably degrade to
mixed-state form. By continuity, the GMNL in the networks of pure
bipartite entanglement considered here must be robust to some noise.
Quantifying this tolerance is interesting for future work.

Further, we have shown that a tensor product of finitely many GME
states is always GMNL. The question whether all single-copy pure
GME states are GMNL remains open.

The assumption that the distributions $P_M,\,P_{\overline{M}}$ are nonsignalling in the GMNL definition is physically natural. Still, removing it raises the stakes to achieve nonlocality, and establishing analogous results with the stronger definition is an open question.

Very recently, Ref. \cite{navascues_genuine_2020} proposed the concept
of ``genuine network entanglement'', a stricter notion than GME
which rules out states which are a tensor product of non-GME states.
One might hope that states that are GME but not genuine network entangled
might be detected device independently by not passing GMNL tests.
However, our results show this will not work. Any distribution of
pure bipartite states, even with arbitrarily weak entanglement, always
displays GMNL as long as all parties are connected. This further motivates
searching for an analogous concept of genuine network nonlocality
that may detect genuine network entanglement.

\begin{acknowledgments}
The authors thank Aleksander M. Kubicki and Elie Wolfe for enlightening
discussions. This research was funded by the Spanish MINECO through
Grant No. MTM2017-88385-P and by the Comunidad de Madrid through Grant No. QUITEMAD-CMS2018/TCS-4342. We also acknowledge funding from the Spanish MINECO, through the ``Severo Ochoa Programme
for Centres of Excellence in R\&D'' SEV-2015-0554 and from the Spanish
National Research Council, through the ``Ayuda extraordinaria a Centros
de Excelencia Severo Ochoa'' 20205CEX001 (PCT and CP) and the Spanish
MINECO Grant No. MTM2017-84098-P (JIdV). PCT is grateful for the hospitality
of Perimeter Institute where part of this work was carried out. Research
at Perimeter Institute is supported in part by the Government of Canada
through the Department of Innovation, Science and Economic Development
Canada and by the Province of Ontario through the Ministry of Economic
Development, Job Creation and Trade.
\end{acknowledgments}

%

\cleardoublepage1
\onecolumngrid
\setcounter{thm}{0}
\setcounter{lem}{0}
\setcounter{prop}{0}
\setcounter{equation}{0}
\begin{center}
\large{\textbf{Supplemental material}}
\end{center}
\appendix
%
%
%
%
%
%
%
%
%
%
%
%
%
%
%
%
%
%
%
%
%
%

\title{Supplemental material}

\maketitle
We prove Theorems \ref{thm:gmnl-from-bipartite-ent} and \ref{thm:copies}
in the main text. For the reader's convenience, we also restate some
definitions and the results.

We say a probability distribution $\left\{ P(\alpha_{1}\alpha_{2}...\alpha_{n}|\chi_{1}\chi_{2}...\chi_{n})\right\} _{\alpha_{1},...,\alpha_{n},\chi_{1},...,\chi_{n}}$
(with input $\chi_{i}$ and output $\alpha_{i}$ for party $i$) is
genuine multipartite nonlocal (GMNL) if it cannot be written in the
form 
\begin{equation}
P(\alpha_{1}\alpha_{2}...\alpha_{n}|\chi_{1}\chi_{2}...\chi_{n})=\sum_{M}\sum_{\lambda}q_{M}(\lambda)P_{M}(\{\alpha_{i}\}_{i\in M}|\{\chi_{i}\}_{i\in M},\lambda)P_{\overline{M}}(\{\alpha_{i}\}_{i\in\overline{M}}|\{\chi_{i}\}_{i\in\overline{M}},\lambda),\label{eq:BLdistrNpartite}
\end{equation}
where here and in the rest of the paper we take $\emptyset\neq M\subsetneq[n],$
and we ignore duplicate bipartitions (e.g. by assuming $M$ always
contains party $A_{1}$), $q_{M}(\lambda)\geq0$ for each $\lambda,M,$
$\sum_{\lambda,M}q_{M}(\lambda)=1$ and the distributions $P_{M},P_{\overline{M}}$
on each bipartition are nonsignalling. We will use the notation $[n]:=\{1,...,n\}$
throughout. A state is GMNL if there exist measurements which give
rise to a distribution that cannot be written as (\ref{eq:BLdistrNpartite}).

Given an inequality 
\begin{equation}
\sum_{\substack{\alpha_{i}\chi_{i}\\
i\in[n]
}
}c_{\alpha_{1}...\alpha_{n}\chi_{1},...,\chi_{n}}P(\alpha_{1}...\alpha_{n}|\chi_{1}...\chi_{n})\leq c_{0}
\end{equation}
which holds for all distributions $P$ of the form (\ref{eq:BLdistrNpartite}),
we refer to it as a GMNL inequality. The set of points $P(\alpha_{1}...\alpha_{n}|\chi_{1}..\chi_{n})$
which satisfy it with equality is a \emph{face }of the polytope $\mathcal{B}_{n}$
of $n$-partite distributions (\ref{eq:BLdistrNpartite}). Inequalities
are said to \emph{support} faces of the polytope. Faces $F\neq\mathcal{B}_{n}$
of maximal dimension are \emph{facets}, and inequalities which support
facets are called \emph{facet inequalities}.

The multipartite EPR2 decomposition \cite{elitzur_quantum_1992,almeida_multipartite_2010}
of a probability distribution $P$ is 
\begin{equation}
\begin{aligned}P(\alpha_{1} & ...\alpha_{n}|\chi_{1}...\chi_{n})\\
 & =\sum_{M}p_{L}^{M}P_{L}^{M}(\alpha_{1}...\alpha_{n}|\chi_{1}...\chi_{n})+p_{NS}P_{NS}(\alpha_{1}...\alpha_{n}|\chi_{1}...\chi_{n}),
\end{aligned}
\label{eq:EPR2-multipartite}
\end{equation}
where $p_{L}^{M}\geq0$ for every $M,$ $p_{NS}\geq0$ and 
\begin{equation}
\sum_{M}p_{L}^{M}+p_{NS}=1,
\end{equation}
$P_{L}^{M}$ is local across the bipartition $M|\overline{M},$ and
$P_{NS}$ is nonsignalling. We are interested in decompositions which
maximise the local EPR2 components, in order to deduce properties
about the distributions. For a distribution $P,$ we define 
\begin{equation}
EPR2(P)=\max\left\{ \sum_{M}p_{L}^{M}:P=\sum_{M}p_{L}^{M}P_{L}^{M}+p_{NS}P_{NS},\sum_{M}p_{L}^{M}+p_{NS}=1\right\} 
\end{equation}
and, for a state $\rho,$ we define (with a slight abuse of notation)
\begin{equation}
EPR2(\rho)=\inf\left\{ EPR2(P):P=\tr\left(\bigotimes_{i=1}^{n}E_{\alpha_{i}|\chi_{i}}^{i}\rho\right)\right\} ,
\end{equation}
where the infimum is taken over local measurements $E_{\alpha_{i}|\chi_{i}}^{i}$
on each particle such that 
\begin{equation}
E_{\alpha_{i}|\chi_{i}}^{i}\succcurlyeq0\:\forall\alpha_{i},\chi_{i},\:\sum_{\alpha_{i}}E_{\alpha_{i}|\chi_{i}}^{i}=1\:\forall\chi_{i},\:\forall i\in[n],
\end{equation}
with any number of inputs and outputs. Then, a distribution $P$ or
a state $\rho$ are GMNL if $EPR2(\cdot)<1,$ while they are fully GMNL
if $EPR2(\cdot)=0.$ When considering bipartite distributions and
states, the analogous property is termed full-nonlocality. Notice
that the optimisation for probability distributions yields a maximum
since the number of inputs and outputs is fixed. Instead, the optimisation
for a state may involve measurements with an arbitrarily large number
of inputs or outputs, as is the case for the maximally entangled state
\cite{barrett_maximally_2006}. In this work, the number of inputs
and outputs is always finite, and this will become relevant when bounding
the EPR2 components of distributions arising from maximally entangled
states in Theorems \ref{thm:gmnl-from-bipartite-ent} and \ref{thm:copies}.\medskip{}

\paragraph{GMNL from bipartite entanglement}
\begin{thm}
\label{thm:gmnl-from-bipartite-ent}Any connected network of bipartite
pure entangled states is GMNL.
\end{thm}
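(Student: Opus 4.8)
The plan is to certify genuine multipartite nonlocality through the EPR2 decomposition: by the equivalence recalled above it suffices to exhibit local measurements whose distribution $P$ satisfies $EPR2(P)<1$, since then $EPR2(\rho)\le EPR2(P)<1$. Model the network as a connected graph $G=(V,E)$ whose vertices are the $n$ parties and whose edges carry the shared bipartite pure entangled states. First I would have every party measure each of its particles \emph{separately}; because the global state is a tensor product across edges, the resulting distribution factorizes as $P=\bigotimes_{e\in E}P_{e}$, where $P_{e}$ is the bipartite distribution produced on the two endpoints of edge $e$. Thus the problem reduces to the combinatorial statement that a product of bipartite nonlocal boxes over a connected graph has small total local EPR2 weight.

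The core is a two-party marginal bound combined with a counting argument over a spanning tree. Fix any EPR2 decomposition $P=\sum_{M}p_{L}^{M}P_{L}^{M}+p_{NS}P_{NS}$. For an edge $e=(u,v)$, look at the marginal $P_{uv}$ on parties $u,v$: by the product structure it equals $P_{e}$ tensored with independent local noise, so $EPR2(P_{uv})\le EPR2(P_{e})$. Crucially, every term $P_{L}^{M}$ for which $u,v$ lie on \emph{opposite} sides of the cut $M$ (i.e.\ $e$ crosses $M$) has a bipartite-\emph{local} $uv$-marginal, while all other terms are merely nonsignalling; hence $\sum_{M\,:\,e\text{ crosses }M}p_{L}^{M}\le EPR2(P_{uv})\le EPR2(P_{e})$. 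Now pick a spanning tree $T$ of $G$, which exists by connectivity. Since $T$ is connected, every bipartition $\emptyset\neq M\subsetneq[n]$ is crossed by at least one edge of $T$; assign each such $M$ to one crossing tree edge. Summing the per-edge bounds over $T$ and using that each $M$ is charged exactly once gives $\sum_{M}p_{L}^{M}\le\sum_{e\in T}EPR2(P_{e})$, whence $EPR2(P)\le\sum_{e\in T}EPR2(P_{e})$.

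It then remains to make the right-hand side strictly below $1$ by a suitable choice of measurements. By Gisin's theorem every bipartite pure entangled state is nonlocal, so each $EPR2(P_{e})$ can be made $<1$; but a spanning tree has $n-1$ edges, so I would need each $EPR2(P_{e})$ driven below $1/(n-1)$ and, more importantly, I need that for \emph{every} pure entangled edge state the local EPR2 content can be pushed arbitrarily close to $0$. \textbf{This bipartite input is the main obstacle.} For maximally entangled edge states it follows from the full nonlocality established in \cite{barrett_maximally_2006}, at the cost of using measurements with many inputs and outputs (whence the finite-measurement caveat stressed above matters); for general, non-maximally entangled pure states the delicate point is to establish the same vanishing of the local content, which I would attack either via chained or tilted Bell inequalities, or by incorporating a local-filtering outcome that concentrates the nonlocal weight onto a near-maximally-entangled effective state while carefully bookkeeping the EPR2 contribution of the remaining outcomes. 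Once each $EPR2(P_{e})$ is brought below $1/(n-1)$, the bound of the previous paragraph yields $EPR2(P)<1$ and hence GMNL.
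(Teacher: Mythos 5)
Your reduction lemma is sound as far as it goes: marginalising any EPR2 decomposition of the product distribution onto the endpoints $u,v$ of an edge $e$ does show that the total weight $\sum_{M}p_{L}^{M}$ carried by bipartitions crossed by $e$ is at most $EPR2(P_{uv})\leq EPR2(P_{e})$, and the spanning-tree charging argument then gives $EPR2(P)\leq\sum_{e\in T}EPR2(P_{e})$. The problem is the bipartite input you yourself flag as the main obstacle, and it is worse than an unproven step: it is false. For a non-maximally entangled pure two-qubit state $\cos\theta\ket{00}+\sin\theta\ket{11}$, \emph{every} behaviour generated by local measurements (any number of inputs and outputs) admits an EPR2 decomposition with local weight at least $\cos2\theta>0$; this explicit local-plus-nonsignalling model is due to Portmann, Branciard and Gisin (Phys. Rev. A 86, 012104 (2012), building on Scarani, Phys. Rev. A 77, 042112 (2008)), and it shows $EPR2(P_{e})\geq\cos2\theta$ for all measurement choices, so the local content of such a state cannot be pushed towards zero at all. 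Consequently your requirement $EPR2(P_{e})<1/(n-1)$ is unattainable whenever $\cos2\theta\geq1/(n-1)$: already for a three-party chain of two weakly entangled qubit pairs your bound reads $EPR2(P)\leq2\cos2\theta\geq1$ and certifies nothing, even though the theorem is true there. Chained or tilted inequalities cannot evade this (they would contradict the explicit model), and local filtering is precisely the strategy that \emph{saturates} the bound rather than beats it: the filter fails with probability $\cos2\theta$, and the failure branch is local. Vanishing local content is a genuinely special property of maximally entangled states \cite{barrett_maximally_2006}.

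This is exactly why the paper does not route the hard case through per-edge local content. For less-than-maximally entangled edges it uses Hardy's paradox (Proposition \ref{prop:d-dimHardy}): measurements making three probabilities exactly zero and one strictly positive violate the lifted GMNL inequality (\ref{eq:InGMNL}) no matter how small the positive probability is --- a possibilistic certificate, which yields only (and needs only) that the local weight across any bipartition crossed by such an edge is bounded away from $1$ by some $\varepsilon>0$. Arbitrarily small local weight is invoked solely for maximally entangled edges, where it does hold \cite{barrett_maximally_2006,almeida_multipartite_2010}, and the two regimes are glued together not by summing per-edge bounds but by a contradiction argument using connectivity ($T_{>K}\subseteq S_{\leq K}$): a non-GMNL decomposition would have to place weight $>1-\varepsilon$ on bipartitions not crossed by any maximally entangled edge, but every such bipartition is crossed by a Hardy edge, contradicting the $\leq1-\varepsilon$ bound. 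Your spanning-tree lemma is a nice quantitative observation, but by its nature it can only certify GMNL for networks whose edge local contents sum below $1$; to obtain the full theorem you need a mechanism, like Hardy's paradox, that detects nonlocality across a cut without requiring the corresponding local weight to be small.
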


\begin{proof}
We consider the network as a connected graph where vertices are parties
and edges are states. The graph is such that, at each vertex, there
is one particle for every incident edge \footnote{Throughout the proof we assume $k\geq2.$ If $k=1,$ there are only
two parties sharing bipartite pure entangled states, so the network
is nonlocal by Refs. \cite{gisin_bells_1991,gisin_maximal_1992}.}. We label the edges as $k=1,...,|E|$ (where $|E|$ is the number
of edges of the graph) and the parties as $A_{1},...,A_{n}.$ Since
it will be enough to consider individual measurements on each particle,
we denote the input and output of party $A_{i}$ at edge $k$ as $x_{i}^{k},a_{i}^{k}$
respectively. We group the inputs and outputs of each party as $\chi_{i}=\{x_{i}^{k}\}_{k\in E_{i}},$
$\alpha_{i}=\{a_{i}^{k}\}_{k\in E_{i}}$ where $E_{i}$ is the set
of edges incident to vertex $i.$ Then, the shared distribution is
of the form 
\begin{equation}
P(\alpha_{1},...,\alpha_{n}|\chi_{1},...,\chi_{n})=\prod_{k=1}^{|E|}P_{k}(a_{i}^{k}a_{j}^{k}|x_{i}^{k}x_{j}^{k}),\label{eq:Pnpartite}
\end{equation}
where parties $A_{i},A_{j}$ are connected by edge $k$ (notice that
we label vertices and edges independently), and $P_{k}(a_{i}^{k}a_{j}^{k}|x_{i}^{k}x_{j}^{k})$
is the distribution arising from the state at edge $k.$ It will be
sufficient to consider tree graphs, i.e. such that every pair of vertices
(parties) is connected by exactly one path of edges. If the given
graph is not a tree, any extra edges can be ignored.

Depending on the nature of the shared states, we consider three cases: 
\begin{enumerate}
\item every shared state is less-than-maximally entangled;\label{enu:Hardy} 
\item every shared state is maximally entangled;\label{enu:maxent} 
\item some shared states are maximally entangled, some are not.\label{enu:mix} 
\end{enumerate}
\textbf{Case \ref{enu:Hardy}: }if all states are less-than-maximally
entangled, we prove the result by deriving an inequality that detects
GMNL and finding measurements on the shared states to violate it.
To derive the inequality, we will find bipartite inequalities that
can be violated by the state at each edge $k$, lift them to more
inputs, outputs and parties using the techniques in Ref. \cite{pironio_lifting_2005}
and combine them to obtain a GMNL inequality using tools in Ref. \cite{curchod_versatile_2019}.
We will consider 2-input 2-output measurements on each particle. Thus,
the global distribution will have $2^{|E_{i}|}$ inputs and outputs
for each party $A_{i}.$

We start from the inequality 
\begin{equation}
I=P(00|00)-P(01|01)-P(10|10)-P(00|11)\leq0,\label{eq:Iorig}
\end{equation}
which is a facet inequality equivalent to the CHSH inequality \cite{clauser_proposed_1969}
for nonsignalling distributions \cite{curchod_versatile_2019}. This
inequality detects any bipartite nonlocality present in any bipartition
that splits the parties connected by edge $k$ \cite{curchod_versatile_2019}.
To lift it to $n$ parties, each with $2^{|E_{i}|}$ inputs and outputs
(see Ref. \cite{pironio_lifting_2005}), we must set the inputs and
outputs of the parties that are not connected by edge $k$ to a fixed
value (0, wlog). For the parties $i$ that are connected by edge $k,$
any extra inputs other than $x_{i}^{k}=0_{i}^{k},1_{i}^{k}$ can be
ignored. Outputs must be grouped, by summing over some of their digits,
in order to get an effective 2-output distribution. It will be convenient
to add over the output components $a_{i}^{\bar{k}}$ that do not correspond
to edge $k,$ varying only the digit $a_{i}^{k}=0_{i}^{k},1_{i}^{k}.$
Thus, we obtain the following $n$-partite inequality at each edge
$k:$ 
\begin{equation}
\begin{aligned}I^{k}=\sum_{\overrightarrow{a}_{i}^{\bar{k}},\overrightarrow{a}_{j}^{\bar{k}}} & \left(P\left(\left.0_{i}^{k}\overrightarrow{a}_{i}^{\bar{k}},0_{j}^{k}\overrightarrow{a}_{j}^{\bar{k}},\overrightarrow{0}_{\bar{i},\bar{j}}\right|0_{i}^{k}0_{i}^{\bar{k}},0_{j}^{k}0_{j}^{\bar{k}},\overrightarrow{0}_{\bar{i},\bar{j}}\right)-P\left(\left.0_{i}^{k}\overrightarrow{a}_{i}^{\bar{k}},1_{j}^{k}\overrightarrow{a}_{j}^{\bar{k}},\overrightarrow{0}_{\bar{i},\bar{j}}\right|0_{i}^{k}0_{i}^{\bar{k}},1_{j}^{k}0_{j}^{\bar{k}},\overrightarrow{0}_{\bar{i},\bar{j}}\right)\right.\\
 & \left.-P\left(\left.1_{i}^{k}\overrightarrow{a}_{i}^{\bar{k}},0_{j}^{k}\overrightarrow{a}_{j}^{\bar{k}},\overrightarrow{0}_{\bar{i},\bar{j}}\right|1_{i}^{k}0_{i}^{\bar{k}},0_{j}^{k}0_{j}^{\bar{k}},\overrightarrow{0}_{\bar{i},\bar{j}}\right)-P\left(\left.0_{i}^{k}\overrightarrow{a}_{i}^{\bar{k}},0_{j}^{k}\overrightarrow{a}_{j}^{\bar{k}},\overrightarrow{0}_{\bar{i},\bar{j}}\right|1_{i}^{k}0_{i}^{\bar{k}},1_{j}^{k}0_{j}^{\bar{k}},\overrightarrow{0}_{\bar{i},\bar{j}}\right)\right)\leq0,
\end{aligned}
\label{eq:Ik}
\end{equation}
where the sum is over each binary digit $a_{i}^{\bar{k}},a_{j}^{\bar{k}}$
of the outputs of parties $i,j$ (which are connected by edge $k$),
except digits $a_{i}^{k},a_{j}^{k}$ which are fixed to 0 or 1 in
each term. The term $\overrightarrow{0}_{\bar{i},\bar{j}}$ denotes
input or output 0 for all components of all parties that are not $i,j.$
Thus, each inequality $I^{k}$ detects the bipartite nonlocality present
in the distribution $P$ across any bipartition that splits the parties
connected by edge $k.$ In the particular case of the distribution
(\ref{eq:Pnpartite}), it tells whether the component $P_{k}$ is
nonlocal.

Now, we can combine the inequalities $I^{k}$ to form a GMNL inequality:
\begin{equation}
I_{n}=\sum_{k=1}^{|E|}I^{k}+P(\overrightarrow{0},\overrightarrow{0}|\overrightarrow{0},\overrightarrow{0})-\sum_{k=1}^{|E|}\sum_{\overrightarrow{a}_{i}^{\bar{k}},\overrightarrow{a}_{j}^{\bar{k}}}P\left(\left.0_{i}^{k}\overrightarrow{a}_{i}^{\bar{k}},0_{j}^{k}\overrightarrow{a}_{j}^{\bar{k}},\overrightarrow{0}_{\bar{i},\bar{j}}\right|0_{i}^{k}0_{i}^{\bar{k}},0_{j}^{k}0_{j}^{\bar{k}},\overrightarrow{0}_{\bar{i},\bar{j}}\right)\leq0\,.\label{eq:InGMNL}
\end{equation}
To show that this is indeed a GMNL inequality, we must show that it
holds for any distribution $P$ that is local across some bipartition.
A bipartition of the network defines a cut of the graph. Because the
graph is assumed connected, for every cut there exists an edge $k_{0}$
which crosses the cut. Therefore, if $P$ is local across a bipartition
which is crossed by edge $k_{0},$ then by Ref. \cite{curchod_versatile_2019}
we have 
\begin{equation}
I^{k_{0}}\leq0.
\end{equation}
Hence, 
\begin{equation}
\begin{aligned}I_{n}\leq & \sum_{\substack{k=1\\
k\neq k_{0}
}
}^{|E|}I^{k}+P(\overrightarrow{0},\overrightarrow{0}|\overrightarrow{0},\overrightarrow{0})-\sum_{k=1}^{|E|}\sum_{\overrightarrow{a}_{i}^{\bar{k}},\overrightarrow{a}_{j}^{\bar{k}}}P\left(\left.0_{i}^{k}\overrightarrow{a}_{i}^{\bar{k}},0_{j}^{k}\overrightarrow{a}_{j}^{\bar{k}},\overrightarrow{0}_{\bar{i},\bar{j}}\right|0_{i}^{k}0_{i}^{\bar{k}},0_{j}^{k}0_{j}^{\bar{k}},\overrightarrow{0}_{\bar{i},\bar{j}}\right).\end{aligned}
\end{equation}

For each $k\neq k_{0},$ the only nonnegative term gets subtracted
in the final summation. The term $P(\overrightarrow{0},\overrightarrow{0}|\overrightarrow{0},\overrightarrow{0})$
then cancels out with the first term in the final summation for $k=k_{0},$
leaving only negative terms in the expression as required.

To complete the proof, we find local measurements for each party to
violate inequality (\ref{eq:InGMNL}). Since all shared states are
nonseparable and less-than-maximally entangled, the parties can choose
local measurements on each particle such that all resulting distributions
satisfy Hardy's paradox \cite{hardy_quantum_1992,hardy_nonlocality_1993}:
\begin{equation}
P_{k}(00|00)>0=P_{k}(01|01)=P_{k}(10|10)=P_{k}(00|11)\label{eq:Hardy}
\end{equation}
for each $k=1,...,|E|.$ This was proven for qubits in Refs. \cite{hardy_quantum_1992,hardy_nonlocality_1993},
and we show the extension to any local dimension in Proposition \ref{prop:d-dimHardy}
below. Because the distribution is of the form (\ref{eq:Pnpartite}),
each term in each inequality (\ref{eq:Ik}) simplifies significantly.
For example, the second term gives 
\begin{equation}
\begin{aligned}\sum_{\overrightarrow{a}_{i}^{\bar{k}},\overrightarrow{a}_{j}^{\bar{k}}} & P\left(\left.0_{i}^{k}\overrightarrow{a}_{i}^{\bar{k}},1_{j}^{k}\overrightarrow{a}_{j}^{\bar{k}},\overrightarrow{0}_{\bar{i},\bar{j}}\right|0_{i}^{k}0_{i}^{\bar{k}},1_{j}^{k}0_{j}^{\bar{k}},\overrightarrow{0}_{\bar{i},\bar{j}}\right)\\
 & =P_{k}(0_{i}^{k}1_{j}^{k}|0_{i}^{k}1_{j}^{k})\prod_{\ell}\sum_{a_{i}^{\ell}}P_{\ell}(a_{i}^{\ell}0_{j'}^{\ell}|0_{i}^{\ell}0_{j'}^{\ell})\prod_{\ell'}\sum_{a_{j}^{\ell'}}P_{\ell}(0_{i'}^{\ell'}a_{j}^{\ell'}|0_{i'}^{\ell'}0_{j}^{\ell'})\prod_{m}P_{m}(0_{i'}^{m}0_{j'}^{m}|0_{i'}^{m}0_{j'}^{m})\\
 & =P_{k}(0_{i}^{k}1_{j}^{k}|0_{i}^{k}1_{j}^{k})\,p_{k}\,,
\end{aligned}
\label{eq:summation1}
\end{equation}
where edges $\ell$ connect party $i$ to party $j'\neq j,$, edges
$\ell'$ connect party $j$ to party $i'\neq i,$ and edges $m$ connect
parties $i'$ and $j'$ where $i',j'\neq i,j.$ (Depending on the
structure of the graph, there may be no edges $\ell$, $\ell'$ or
$m$ for a given pair of parties $i,j,$ but that does not affect
the proof.)

The product of the terms $P_{\ell},\,P_{\ell'}$ and $P_{m}$ will
give a number $p_{k}.$ This is similar for the third and fourth terms,
which factorise to 
\begin{equation}
\begin{aligned}P_{k}(1_{i}^{k}0_{j}^{k}|1_{i}^{k}0_{j}^{k})\,p_{k},\\
P_{k}(0_{i}^{k}0_{j}^{k}|1_{i}^{k}1_{j}^{k})\,p_{k}
\end{aligned}
\label{eq:summation2}
\end{equation}
respectively. The first term of each $I^{k}$ cancels out with the
last summation in $I_{n},$ and the only term that remains is 
\begin{equation}
P(\overrightarrow{0},\overrightarrow{0}|\overrightarrow{0},\overrightarrow{0})=\prod_{k=1}^{|E|}P_{k}(0_{i}^{k}0_{j}^{k}|0_{i}^{k}0_{j}^{k}).
\end{equation}
Since $P_{k}$ satisfies Hardy's paradox for every $k$, then the
components of each $P_{k}$ appearing in equations (\ref{eq:summation1}),
(\ref{eq:summation2}) are all zero, while the only surviving term,
$P(\overrightarrow{0},\overrightarrow{0}|\overrightarrow{0},\overrightarrow{0}),$
is strictly greater than zero. Thus, the inequality $I_{n}$ is violated,
showing that $P$ is GMNL.

\textbf{Case \ref{enu:maxent}:} for every bipartition, there is an
edge that crosses the corresponding cut, and each of these edges already
contains a maximally entangled state. Therefore, the present network
meets the requirements of Theorem 2 in \cite{almeida_multipartite_2010},
so the network is GMNL---in fact it is fully GMNL.

\textbf{Case \ref{enu:mix}: }assume wlog that each edge $k=1,...,K$
contains a less-than-maximally entangled state, while each edge $k=K+1,...,|E|$
contains a maximally entangled state. Let 
\begin{equation}
P=P_{H}P_{+}\label{eq:PHP+}
\end{equation}
where 
\begin{equation}
\begin{aligned}P_{H}(\{a_{i}^{k}\}_{k\leq K,i\in[n]}|\{x_{i}^{k}\}_{k\leq K,i\in[n]}) & =\prod_{k=1}^{K}P_{k}(a_{i}^{k}a_{j}^{k}|x_{i}^{k}x_{j}^{k}),\\
P_{+}(\{a_{i}^{k}\}_{k>K,i\in[n]}|\{x_{i}^{k}\}_{k>K,i\in[n]}) & =\prod_{k=K+1}^{|E|}P_{k}(a_{i}^{k}a_{j}^{k}|x_{i}^{k}x_{j}^{k})
\end{aligned}
\end{equation}
where, on the right-hand side, parties $i,j$ are connected by edge
$k.$ For $k=1,...,K,$ terms $P_{k}$ satisfy Hardy's paradox (equation
(\ref{eq:Hardy})), as they arise from the measurements performed
in Case \ref{enu:Hardy}. For $k=K+1,...,|E|,$ the terms $P_{k}$
arise from measurements on the maximally entangled state to be specified
later. We now classify bipartitions depending on whether or not they
are crossed by an edge $k\leq K$ or $k>K:$ let $S_{\leq K}$ be
the set of bipartitions $M|\overline{M}$ (indexed by $M$) which
are crossed by an edge $k\leq K,$ and $T_{\leq K}$ be its complement,
i.e. the set of bipartitions which are \emph{not} crossed by an edge
$k\leq K.$ Similarly, $S_{>K}$ (respectively, $T_{>K}$) is the
set of bipartitions which are (not) crossed by an edge $k>K.$

Let $I_{H}^{k}$ be an inequality detecting nonlocality on edge $k,$
for the distribution $P_{H}.$ That is, $I_{H}^{k}$ is as in equation
(\ref{eq:Ik}) but where the sum over $\overrightarrow{a}_{i}^{\bar{k}},\overrightarrow{a}_{j}^{\bar{k}}$
concerns only the components of parties $A_{i},A_{j}$ that belong
only to edges $k'\leq K,k'\neq k.$ Then, consider the following functional
acting on distributions of the form of $P_{H}:$ 
\begin{equation}
I_{H}=\sum_{k=1}^{K}I_{H}^{k}+P(\overrightarrow{0},\overrightarrow{0}|\overrightarrow{0},\overrightarrow{0})-\sum_{k=1}^{K}\sum_{\overrightarrow{a}_{i}^{\bar{k}},\overrightarrow{a}_{j}^{\bar{k}}}P\left(\left.0_{i}^{k}\overrightarrow{a}_{i}^{\bar{k}},0_{j}^{k}\overrightarrow{a}_{j}^{\bar{k}},\overrightarrow{0}_{\bar{i},\bar{j}}\right|0_{i}^{k}0_{i}^{\bar{k}},0_{j}^{k}0_{j}^{\bar{k}},\overrightarrow{0}_{\bar{i},\bar{j}}\right).
\end{equation}
Again, the summation in the last term concerns only components that
belong to edges $k'\leq K,k'\neq k.$ We claim that the functional
$I_{H}$ is nonpositive for any distribution $P$ that is local across
a bipartition of type $S_{\leq K},$ i.e. one that is crossed by an
edge $k_{0}\leq K.$ The reasoning is similar to that in Case \ref{enu:Hardy}:
if $P$ is local across a bipartition crossed by an edge $k_{0}\leq K,$
then $I_{H}^{k_{0}}\leq0$ will be satisfied, and so 
\begin{equation}
I_{H}\leq\sum_{\substack{k=1\\
k\neq k_{0}
}
}^{K}I_{H}^{k}+P(\overrightarrow{0},\overrightarrow{0}|\overrightarrow{0},\overrightarrow{0})-\sum_{k=1}^{K}\sum_{\overrightarrow{a}_{i}^{\bar{k}},\overrightarrow{a}_{j}^{\bar{k}}}P\left(\left.0_{i}^{k}\overrightarrow{a}_{i}^{\bar{k}},0_{j}^{k}\overrightarrow{a}_{j}^{\bar{k}},\overrightarrow{0}_{\bar{i},\bar{j}}\right|0_{i}^{k}0_{i}^{\bar{k}},0_{j}^{k}0_{j}^{\bar{k}},\overrightarrow{0}_{\bar{i},\bar{j}}\right).
\end{equation}
Now, for each $k\neq k_{0},$ the only nonnegative term gets subtracted
in the final summation. The term $P(\overrightarrow{0},\overrightarrow{0}|\overrightarrow{0},\overrightarrow{0})$
then cancels out with the first term in the final summation for $k=k_{0},$
leaving only negative terms in the expression as required.

We now show that, for $P=P_{H},$ we have $I_{H}>0.$ Indeed, the
terms in $I_{H}^{k}$ simplify in a similar manner to Case \ref{enu:Hardy}.
Then, since each $P_{k},$ $k\leq K$ satisfies Hardy's paradox, the
second, third and fourth terms in each $I_{H}^{k}$ are zero, the
first cancels out with the last summation, and the only surviving
term is 
\begin{equation}
P(\overrightarrow{0},\overrightarrow{0}|\overrightarrow{0},\overrightarrow{0})=\prod_{k=1}^{K}P_{k}(0_{i}^{k}0_{j}^{k}|0_{i}^{k}0_{j}^{k})>0.
\end{equation}

This means that there exists an $\varepsilon>0$ such that, for any
EPR2 decomposition of $P_{H},$ 
\begin{equation}
P_{H}=\sum_{M}p_{L,H}^{M}P_{L,H}^{M}+p_{NS,H}P_{NS,H},
\end{equation}
we have that the terms where $P_{L,H}^{M}$ is local across a bipartition
such that $M\in S_{\leq K}$ satisfy 
\begin{equation}
\sum_{M\in S_{\le K}}p_{L,H}^{M}\leq1-\varepsilon.\label{eq:plocal-Hardy}
\end{equation}

Also, it can be deduced from Ref. \cite{almeida_multipartite_2010}
that, given the $\varepsilon$ above, the parties can choose suitable
measurements such that $P_{+}$ is fully nonlocal across all bipartitions
$S_{>K}.$ That is, any multipartite EPR2 decomposition of $P_{+},$
\begin{equation}
P_{+}=\sum_{M}p_{L,+}^{M}P_{L,+}^{M}+p_{NS,+}P_{NS,+},
\end{equation}
is such that the terms where $P_{L,+}^{M}$ is local across a bipartition
such that $M\in S_{>K}$ satisfy 
\begin{equation}
\sum_{M\in S_{>K}}p_{L,+}^{M}<\varepsilon.\label{eq:plocal-maxent}
\end{equation}

To prove that the global distribution $P$ is GMNL, as is our goal,
we assume the converse, and we derive a contradiction from the nonlocality
properties of $P_{H}$ and $P_{+}.$ Assuming $P$ is not GMNL, we
can express the distribution as 
\begin{equation}
P=\sum_{\lambda,M}p_{L}^{M}(\lambda)P_{M}(\{\alpha_{i}\}_{i\in M}|\{\chi_{i}\}_{i\in M},\lambda)P_{\overline{M}}(\{\alpha_{i}\}_{i\in\overline{M}}|\{\chi_{i}\}_{i\in\overline{M}},\lambda),\label{eq:PnotGMNL}
\end{equation}
where $p_{L}^{M}(\lambda)$ are nonegative numbers for every $M,\lambda$
such that 
\begin{equation}
\sum_{\lambda,M}p_{L}^{M}(\lambda)=1,
\end{equation}
for each $\alpha_{i},\chi_{i},i=1,...,n.$

Then, summing over the output components $a_{i}^{k}$ for all $k\leq K$
and all $i,$ we get $P_{+}$ on the left-hand side, from equation
(\ref{eq:PHP+}). On the right-hand side, we get two types of terms
(depending on the type of bipartition) that turn out to form an EPR2
decomposition of $P_{+}$ \footnote{Note that, while each of the terms on the right-hand side may depend
on the whole of each party's input $\chi_{i},$ the left-hand side
does not, because the distribution is of the form (\ref{eq:PHP+}).
That is, the resulting EPR2 decomposition of $P_{+}$ holds for any
fixed value of the inputs $\{x_{i}^{k}\}_{k\leq K}$ on the left-
and right-hand sides.}. Indeed, the local terms are given by bipartitions such that $M\in S_{>K},$
while the nonlocal terms are given by bipartitions such that $M\in T_{>K}$
(since all terms are nonsignalling). By equation (\ref{eq:plocal-maxent}),
the choice of measurements on the particles involved in $P_{+}$ ensures
that 
\begin{equation}
\sum_{\lambda,M\in S_{>K}}p_{L}^{M}(\lambda)<\varepsilon,
\end{equation}
while 
\begin{equation}
\sum_{\lambda,M\in T_{>K}}p_{L}^{M}(\lambda)>1-\varepsilon.\label{eq:sumI1-1}
\end{equation}

If, instead, we sum over the output components $a_{i}^{k}$ for all
$k>K$ and all $i,$ we get $P_{H}$ on the left-hand side, from equation
(\ref{eq:PHP+}). On the right-hand side, by similar reasoning we
find an EPR2 decomposition of $P_{H}.$ This time, $S_{\leq K}$ will
give the local terms and $T_{\leq K}$ will give the nonlocal terms.
By equation (\ref{eq:plocal-Hardy}), we have 
\begin{equation}
\sum_{\lambda,M\in S_{\leq K}}p_{L}^{M}(\lambda)\leq1-\varepsilon.\label{eq:sumJ2}
\end{equation}
Now, since the graph is connected, if a bipartition is not crossed
by an edge $k>K,$ then it must be crossed by an edge $k\leq K.$
That is, $T_{>K}\subseteq S_{\leq K}.$ This means that equation (\ref{eq:sumJ2})
also holds if the sum is over $T_{>K},$ but this contradicts equation
(\ref{eq:sumI1-1}). Therefore, the distribution $P$ must be GMNL.
\end{proof}
In Theorem \ref{thm:gmnl-from-bipartite-ent} we assumed that all
less-than-maximally entangled states satisfy Hardy's paradox. This
is shown for qubits in \cite{hardy_nonlocality_1993}, and we now
extend the proof to any dimension.

\setcounter{thm}{0} 
\begin{prop}
\label{prop:d-dimHardy}Let $\ket{\psi}\in\mathcal{H}_{A}\otimes\mathcal{H}_{B}\cong\left(\mathbb{C}^{d}\right)^{\otimes2}$
be a nonseparable and less-than-maximally entangled pure state . Then,
$\ket{\psi}$ satisfies Hardy's paradox. 
\end{prop}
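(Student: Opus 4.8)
The plan is to use the Schmidt decomposition to isolate a two-qubit block on which $\ket{\psi}$ restricts to a non-maximally-entangled two-qubit state, apply the known qubit Hardy construction there, and then extend the qubit measurements to $\mathbb{C}^{d}$ so that the extra Schmidt components do not spoil the Hardy conditions. Write $\ket{\psi}=\sum_{k=0}^{d-1}\lambda_{k}\ket{k}_{A}\ket{k}_{B}$ in its Schmidt basis with $\lambda_{k}\geq 0$. Nonseparability guarantees at least two nonzero Schmidt coefficients, and the less-than-maximally-entangled hypothesis means the nonzero coefficients are not all equal; hence, after relabelling, there are indices $0$ and $1$ with $\lambda_{0}>0$, $\lambda_{1}>0$ and $\lambda_{0}\neq\lambda_{1}$. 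The normalised two-qubit state $\ket{\phi}=(\lambda_{0}\ket{00}+\lambda_{1}\ket{11})/\sqrt{\lambda_{0}^{2}+\lambda_{1}^{2}}$, supported on $\operatorname{span}\{\ket{0},\ket{1}\}_{A}\otimes\operatorname{span}\{\ket{0},\ket{1}\}_{B}$, is entangled and non-maximally entangled, so by Refs. \cite{hardy_quantum_1992,hardy_nonlocality_1993} there exist two-outcome qubit measurements $\{M^{A}_{a|x}\}$ and $\{M^{B}_{b|y}\}$, with $a,b,x,y\in\{0,1\}$, that realise Hardy's paradox for $\ket{\phi}$.

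Next I would lift these measurements to $\mathbb{C}^{d}$. Let $P_{\perp}$ denote the projector onto $\operatorname{span}\{\ket{2},\dots,\ket{d-1}\}$ on each side, and define $E^{A}_{0|x}:=M^{A}_{0|x}$ and $E^{A}_{1|x}:=M^{A}_{1|x}+P_{\perp}$, and analogously $E^{B}_{0|y}:=M^{B}_{0|y}$, $E^{B}_{1|y}:=M^{B}_{1|y}+P_{\perp}$. These form valid two-outcome measurements on $\mathbb{C}^{d}$ since $E^{A}_{0|x}+E^{A}_{1|x}=1$ (the identity), and likewise for Bob. The whole point of routing the orthogonal complement into the outcome-$1$ operators is that every outcome-$0$ operator remains supported on the two-dimensional block $\operatorname{span}\{\ket{0},\ket{1}\}$.

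It then remains to check that the four Hardy quantities for $\ket{\psi}$ under these measurements equal the corresponding qubit quantities for $\ket{\phi}$ (up to the positive factor $\lambda_{0}^{2}+\lambda_{1}^{2}$). Expanding in the Schmidt basis, $P(ab|xy)=\sum_{k,l}\lambda_{k}\lambda_{l}\,\langle k|E^{A}_{a|x}|l\rangle\,\langle k|E^{B}_{b|y}|l\rangle$. In each of the three vanishing conditions $P(01|01)$, $P(10|10)$, $P(00|11)$, at least one of the two operators carries outcome $0$ (Alice's in the first, Bob's in the second, both in the third). Such an operator is supported on $\operatorname{span}\{\ket{0},\ket{1}\}$, so $\langle k|\cdot|l\rangle=0$ unless $k,l\in\{0,1\}$, restricting the sum to that block; there the remaining outcome-$1$ operator satisfies $\langle k|E_{1|\cdot}|l\rangle=\langle k|M_{1|\cdot}|l\rangle$, since $P_{\perp}$ has no matrix elements on $\{\ket{0},\ket{1}\}$. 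Hence each sum collapses to the corresponding qubit expression and vanishes. For $P(00|00)$ both operators are outcome-$0$ and block-supported, so $P(00|00)=(\lambda_{0}^{2}+\lambda_{1}^{2})\,\langle\phi|M^{A}_{0|0}\otimes M^{B}_{0|0}|\phi\rangle>0$. Thus the lifted measurements realise Hardy's paradox for $\ket{\psi}$.

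The step I expect to be the main obstacle is exactly this last verification: showing that the additional Schmidt components contribute nothing to the three zero-probability conditions. The structural fact that makes the extension work is that Hardy's four events $(00|00),(01|01),(10|10),(00|11)$ all assign outcome $0$ to at least one party, so placing $P_{\perp}$ entirely inside the outcome-$1$ operators ensures that a block-supported outcome-$0$ projector appears in, and thereby annihilates, every cross term involving a Schmidt index $\geq 2$. A minor point to pin down beforehand is the interpretation of \emph{less-than-maximally entangled} as \emph{the nonzero Schmidt coefficients are not all equal}, which is precisely what guarantees the unequal pair $\lambda_{0}\neq\lambda_{1}$ needed to invoke the qubit result.
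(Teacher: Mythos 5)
Your proposal is correct and takes essentially the same route as the paper's proof: Schmidt-decompose, extract two unequal nonzero coefficients (under the same reading of \emph{less-than-maximally entangled} as ``nonzero Schmidt coefficients not all equal''), realise Hardy's paradox on the resulting two-qubit block, and pad the measurement operators with the projector onto $\mathrm{span}\{\ket{2},...,\ket{d-1}\}$ so that every Hardy event contains a block-supported operator that annihilates all cross terms involving Schmidt indices $\geq2$. The only cosmetic differences are that the paper writes out explicit Hardy vectors $\bra{e_{0|0}},\bra{e_{1|1}},\bra{f_{0|0}},\bra{f_{1|1}}$ instead of invoking the two-qubit result as a black box, and it attaches $\mathbbm1_{2,...,d-1}$ to a slightly different set of operators (e.g.\ to Bob's outcome-$0$ operator $F_{0|1}$ rather than to $F_{1|1}$), which achieves the same effect as your uniform ``pad all outcome-$1$ operators'' rule.
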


\begin{proof}
Let $\ket{\psi}$ be as in the statement of the Proposition. We present
2-input, 2-output measurements for $\ket{\psi}$ to generate a distribution
which satisfies Hardy's paradox \cite{hardy_quantum_1992,hardy_nonlocality_1993}
using tools from Ref. \cite{curchod_versatile_2019}.

Consider the Schmidt decomposition 
\begin{equation}
\ket{\psi}=\sum_{i=0}^{d-1}\lambda_{i}^{1/2}\ket{ii}
\end{equation}
and assume the coefficients are ordered such that $0\neq\lambda_{0}\neq\lambda_{1}\neq0,$
which is always possible if the state is nonseparable and less-than-maximally
entangled. Wlog assume the Schmidt basis of the state is the canonical
basis. Let $\alpha\in]0,\pi/2[$ and $\delta\in\mathbb{R}$ and consider
the dual vectors 
\begin{equation}
\begin{aligned}\bra{e_{0|0}} & =\cos\alpha\bra{0}+\ue^{\ui\delta}\sin\alpha\bra{1}\\
\bra{e_{1|1}} & =\lambda_{0}\cos\alpha\bra{0}+\lambda_{1}\ue^{\ui\delta}\sin\alpha\bra{1}\\
\bra{f_{0|0}} & =\lambda_{1}^{3/2}\ue^{\ui\delta}\sin\alpha\bra{0}-\lambda_{0}^{3/2}\cos\alpha\bra{1}\\
\bra{f_{1|1}} & =\lambda_{1}^{1/2}\ue^{\ui\delta}\sin\alpha\bra{0}-\lambda_{0}^{1/2}\cos\alpha\bra{1}
\end{aligned}
\label{eq:Hardymts}
\end{equation}
(one can write the projectors in the Schmidt basis of the state instead
of assuming the state decomposes into the canonical basis). Define
the measurements $E_{a|x}$ for Alice, with input $x$ and output
$a,$ and $F_{b|y}$ for Bob, with input $y$ and output $b,$ given
by 
\begin{equation}
\begin{aligned}E_{0|0} & =\ketbra{e_{0|0}}\\
E_{1|0} & \propto\ketbra{e_{0|0}}^{\perp}\oplus\mathbbm1_{2,...,d-1}\\
E_{0|1} & \propto\ketbra{e_{1|1}}^{\perp}\\
E_{1|1} & \propto\ketbra{e_{1|1}}\oplus\mathbbm1_{2,...,d-1}\\
F_{0|0} & \propto\ketbra{f_{0|0}}\\
F_{1|0} & \propto\ketbra{f_{0|0}}^{\perp}\oplus\mathbbm1_{2,...,d-1}\\
F_{0|1} & \propto\ketbra{f_{1|1}}^{\perp}\oplus\mathbbm1_{2,...,d-1}\\
F_{1|1} & \propto\ketbra{f_{1|1}}
\end{aligned}
\end{equation}
where $\ketbra{e_{0|0}}^{\perp}$ denotes the density matrix corresponding
to the vector orthogonal to $\ket{e_{0|0}}$ when restricted to the
subspace spanned by $\{\ket{0},\ket{1}\},$ and $\mathbbm1_{2,...,d-1}$
is the identity operator on the subspace spanned by $\left\{ \ket{i}\right\} _{i=2}^{d-1},$
for either Alice or Bob. Note that, since we are only interested in
whether some probabilities are equal or different from zero, normalisation
will not play a role.

We now show that the distribution given by 
\begin{equation}
P(ab|xy)=\tr(E_{a|x}\otimes F_{b|y}\ketbra{\psi})
\end{equation}
satisfies Hardy's paradox. Indeed, because of the probabilities considered
and the form of the measurements, only the terms in $i=0,1$ contribute
to the probabilities that appear in Hardy's paradox, therefore

\begin{equation}
\begin{aligned}P(01|01)\propto\left|\sum_{i=0}^{1}\lambda_{i}^{1/2}\left(\bra{e_{0|0}}\otimes\bra{f_{1|1}}\right)\ket{ii}\right|^{2} & =0\\
P(10|10)\propto\left|\sum_{i=0}^{1}\lambda_{i}^{1/2}\left(\bra{e_{1|1}}\otimes\bra{f_{0|0}}\right)\ket{ii}\right|^{2} & =0\\
P(00|11)\propto\left|\sum_{i=0}^{1}\lambda_{i}^{1/2}\left(\bra{e_{0|1}}\otimes\bra{f_{0|1}}\right)\ket{ii}\right|^{2} & =0.
\end{aligned}
\end{equation}
For $P(00|00),$ we find 
\begin{equation}
\begin{aligned}P(00|00) & \propto\left|\sum_{i=0}^{1}\lambda_{i}^{1/2}\left(\bra{e_{0|0}}\otimes\bra{f_{0|0}}\right)\ket{ii}\right|^{2}\\
 & =\left|\ue^{\ui\delta}\sin\alpha\cos\alpha\,\lambda_{0}^{1/2}\,\lambda_{1}^{1/2}\,(\lambda_{1}-\lambda_{0})\right|^{2},
\end{aligned}
\end{equation}
which is strictly greater than zero when $\alpha\in]0,\pi/2[$ and
$0\neq\lambda_{0}\neq\lambda_{1}\neq0,$ like we assumed. This proves
the claim. 
\end{proof}

\setcounter{thm}{1} 
\paragraph{GMNL from GME\protect \protect \\
 }

We fix some notation that we will use in Theorem \ref{thm:copies}.
The result considers a GME state $\ket{\Psi}\in\mathcal{H}_{A}\otimes\mathcal{H}_{B_{1}}...\otimes\mathcal{H}_{B_{n-1}}\cong(\mathbb{C}^{d})^{\otimes n},$
$n-1$ copies of which are shared between $n$ parties $A,B_{1},...,B_{n-1}.$
Each party measures locally on each particle, like in Theorem \ref{thm:gmnl-from-bipartite-ent}.
We denote Alice's input and output, respectively, as $\chi\equiv x_{1}...x_{n-1},$
$\alpha\equiv a_{1}...a_{n-1}$ in terms of the digits $x_{i},a_{i}$
corresponding to each particle $i\in[n-1].$ We let the measurement
made by party $B_{j}$ on copy $i$ have input $y_{j}^{i}$ and output
$b_{j}^{i}\,,$ where $i,j=1,...,n-1,$ and for each $j$ we denote
$\upsilon_{j}=y_{j}^{1}...y_{j}^{n-1}$ and $\beta_{j}=b_{j}^{1}...b_{j}^{n-1}$
digit-wise. Then, after measurement, the parties share a distribution
\begin{equation}
\left\{ P(\alpha\beta_{1}...\beta_{n-1}|\chi\upsilon_{1}...\upsilon_{n-1})\right\} _{\substack{\alpha,\beta_{1}...\beta_{n-1}\\
\chi,\upsilon_{1}...\upsilon_{n-1}
}
}.
\end{equation}
Because we are considering local measurements made on each particle,
this distribution is of the form 
\begin{equation}
P(\alpha\beta_{1}...\beta_{n-1}|\chi\upsilon_{1}...\upsilon_{n-1})=\prod_{i=1}^{n-1}P_{i}(a_{i}b_{1}^{i}...b_{n-1}^{i}|x_{i}y_{1}^{i}...y_{n-1}^{i}),\label{eq:Pnpartite-fromGME}
\end{equation}
where each $P_{i}$ is the distribution arising from copy $i$ of
the state $\ket{\Psi}.$ As advanced in the main text, each copy $i$
of the state $\ket{\Psi}$ will give an edge of a star network connecting
Alice and party $B_{i}.$ Because of the structure of this particular
network, we can simplify the notation with respect to Theorem \ref{thm:gmnl-from-bipartite-ent}
and identify the index of each party $B_{i}$ with its corresponding
edge $i.$
\begin{thm}
\label{thm:copies}Any GME state $\ket{\Psi}\in\mathcal{H}_{A}\otimes\mathcal{H}_{B_{1}}...\otimes\mathcal{H}_{B_{n-1}}\cong(\mathbb{C}^{d})^{\otimes n}$
is such that $\ket{\Psi}^{\otimes(n-1)}$ is GMNL. 
\end{thm}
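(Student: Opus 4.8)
The plan is to mirror the proof of Theorem~\ref{thm:gmnl-from-bipartite-ent}, viewing the $n-1$ copies of $\ket\Psi$ as the edges of a star network in which Alice is the central hub and each $B_i$ is the leaf joined to her through copy $i$. With this picture the measured distribution factorises as in \eqref{eq:Pnpartite-fromGME}, and the star is connected, so every bipartition $M|\overline M$ of the $n$ parties is crossed by at least one edge---precisely the property that makes the combined functional in \eqref{eq:InGMNL} a GMNL inequality. The one discrepancy with Theorem~\ref{thm:gmnl-from-bipartite-ent} is that each factor $P_i$ in \eqref{eq:Pnpartite-fromGME} is $n$-partite, since every $B_j$ holds a particle of copy $i$, whereas the edges there are genuinely bipartite. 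The strategy is therefore to make copy $i$ \emph{act} as a bipartite pure entangled state on the edge $A$--$B_i$ by letting the off-edge parties $B_j$ ($j\neq i$) project their copy-$i$ particles onto a fixed product vector, and then to rerun the inequality argument with those projections folded into the measurements.

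The heart of the argument, and what I expect to be the main obstacle, is an extraction lemma: since $\ket\Psi$ is GME it is entangled across every bipartition, and for each $i$ there should exist rank-one projectors $\{\ketbra{\phi_j^i}\}_{j\neq i}$ such that $\ket{\psi_i}\propto\bigl(\mathbbm1_A\otimes\bigotimes_{j\neq i}\bra{\phi_j^i}\otimes\mathbbm1_{B_i}\bigr)\ket\Psi$ is entangled across $A|B_i$. Expanding $\ket\Psi$ in its Schmidt basis across the cut $A|B_1\dots B_{n-1}$, which has rank at least two by GME, the projection sends the $B$-side Schmidt vectors to vectors on $B_i$, and $\ket{\psi_i}$ is entangled exactly when two of these remain linearly independent. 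Establishing that such a projection exists for every pair $A$--$B_i$, in arbitrary local dimension and for arbitrary $n$, is the delicate point: its failure would force $\ket\Psi$ to be a product across the cut $A|\mathrm{rest}$ or $B_i|\mathrm{rest}$, contradicting GME, but making this precise requires care (e.g.\ via the structure of subspaces of product vectors, or an inductive single-party projection argument). Because being entangled is an open condition on the $\{\phi_j^i\}$ while being maximally entangled is non-generic, I may further choose the projectors so that each $\ket{\psi_i}$ is \emph{less-than-maximally} entangled, placing every edge in the Hardy regime; should this genericity fail, the maximally entangled and mixed edges are handled verbatim by Cases~\ref{enu:maxent} and \ref{enu:mix} of Theorem~\ref{thm:gmnl-from-bipartite-ent}.

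With the extraction in hand the remainder runs exactly as in Case~\ref{enu:Hardy} of Theorem~\ref{thm:gmnl-from-bipartite-ent} and uses no post-selection. I declare $\ketbra{\phi_j^i}$ to be the ``$0$'' outcome of a genuine two-outcome measurement of $B_j$ on its copy-$i$ particle, while Alice and $B_i$ measure copy $i$ with the Hardy observables of Proposition~\ref{prop:d-dimHardy}. Evaluating the honest distribution \eqref{eq:Pnpartite-fromGME} at the entries in which every off-edge party outputs $0$ then reproduces, up to a positive constant absorbed just as the factors $p_k$ are in \eqref{eq:summation1}, the bipartite Hardy distribution of $\ket{\psi_i}$, so relations \eqref{eq:Hardy} hold on every edge. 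I can then form the lifted edge functionals $I^i$ and the combined functional $I_n$ exactly as in \eqref{eq:Ik}--\eqref{eq:InGMNL}, the fixed settings and outcomes $\overrightarrow0_{\bar i,\bar j}$ of the parties not on edge $i$ now also encoding the localising outcomes. Connectedness gives an edge $k_0$ crossing any given bipartition, so $I^{k_0}\le0$ on biseparable distributions and $I_n\le0$ is a valid GMNL inequality; the Hardy relations annihilate every negative term and leave only $P(\overrightarrow0,\overrightarrow0|\overrightarrow0,\overrightarrow0)=\prod_{i}P_i(0_A0_{B_i}|0_A0_{B_i})>0$, violating $I_n$ and showing that $\ket\Psi^{\otimes(n-1)}$ is GMNL. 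In short, the downstream manipulation is a faithful transcription of Theorem~\ref{thm:gmnl-from-bipartite-ent} once each copy has been collapsed to its edge, and the genuinely new and hardest ingredient is the extraction lemma, which is the only place where genuine multipartite entanglement, rather than mere entanglement across each cut, is really used.
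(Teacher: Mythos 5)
Your star-network framing and your all-Hardy argument coincide with the paper's Case \ref{enu:Hardy-copies}: the off-edge projections become the ``$0$'' outcome of genuine measurements, the lifted functionals of the form (\ref{eq:Ik}) freeze those outcomes, and the violation argument runs exactly as in Case \ref{enu:Hardy} of Theorem \ref{thm:gmnl-from-bipartite-ent}. Note also that your ``extraction lemma'' is not something to reprove: it is precisely the Popescu--Rohrlich result whose proof was completed by Gachechiladze and G\"uhne \cite{popescu_generic_1992,gachechiladze_completing_2017}, which the paper simply cites. The first genuine gap is your genericity claim, which is false: it is not always possible to choose the projectors so that $\ket{\psi_i}$ is less-than-maximally entangled. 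Take $n=3$ and $\ket{\Psi}=\ket{\Phi^{+}}_{A_{1}B_{1}}\otimes\ket{\Phi^{+}}_{A_{2}B_{2}}$, with Alice holding both qubits $A_{1}A_{2}$. This state is GME (entangled across all three cuts), yet for \emph{every} rank-one projection of $B_{2}$ the residual $A|B_{1}$ state is $\ket{\Phi^{+}}_{A_{1}B_{1}}$ tensored with a local vector on $A_{2}$, hence exactly maximally entangled, and symmetrically for $A|B_{2}$. Openness of entanglement does not help, because the image of the projection map can lie entirely inside the maximally entangled set. So the fallback cases carry the weight of the theorem for a nonempty class of states; this is why the paper proves a careful dichotomy (via a continuity argument interpolating between a maximally entangled and a separable outcome) showing that for each $i$ either some outcome gives a less-than-maximally entangled state, or \emph{all} outcomes give maximally entangled ones.

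The second, larger gap is the claim that these cases are ``handled verbatim'' by Cases \ref{enu:maxent} and \ref{enu:mix} of Theorem \ref{thm:gmnl-from-bipartite-ent}; this is exactly where the paper has to do new work, and the transfer cannot be verbatim. Those cases concern networks of \emph{actual} bipartite states, whereas here a maximally entangled ``edge'' exists only as an outcome-dependent post-measurement state. Your trick of folding localising outcomes into the Bell functional works for Hardy edges because their nonlocality is certified by a finite linear inequality evaluated on finitely many fixed entries (and a single projection outcome suffices); full nonlocality of maximally entangled states admits no such finite certificate---it needs unboundedly many measurements \cite{barrett_maximally_2006,almeida_multipartite_2010}---so a maximally entangled edge cannot be fed into an inequality with frozen off-edge outcomes, and Alice and $B_i$ cannot adapt their measurements to the outcome-dependent state $\ket{\phi_{\overrightarrow{b}}}$ without communication. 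What the paper actually does is different: in Case \ref{enu:maxent-copies} it applies Theorem 2 of \cite{almeida_multipartite_2010} to a \emph{single} copy of $\ket{\Psi}$, whose hypothesis (deterministic, all-outcome extractability of maximal entanglement across every cut) is supplied by the second horn of the dichotomy, so one copy is already GMNL; in Case \ref{enu:mix-copies} it runs an EPR2 contradiction on $P=P_{H}P_{+}$, where the bound (\ref{eq:pHlocal-copies}) on $P_{H}$ requires the nonsignalling factorisation (\ref{eq:PHcondtl-copies}) and a restriction argument (needed because, unlike in Theorem \ref{thm:gmnl-from-bipartite-ent}, every party holds a particle of every copy), where $P_{+}$ comes from one extra copy made fully nonlocal across each $A|B_{i}$, $i>K$, by Theorem 1 of \cite{almeida_multipartite_2010}, and where marginalising over digits demands care since the marginals can be signalling in the discarded inputs. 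None of this machinery appears in Theorem \ref{thm:gmnl-from-bipartite-ent}, and your proposal as written is missing it.
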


\begin{proof}
For each copy $i=1,...,n-1$ of the state $\ket{\Psi},$ we will find
measurements for parties $\{B_{j}\}_{j\neq i}$ that leave Alice and
party $B_{i}$ with a bipartite entangled state. This will yield a
network in a similar configuration to Theorem \ref{thm:gmnl-from-bipartite-ent}
for a star network, but conditionalised on the inputs and outputs
of these measurements. We will generalise the result of Theorem \ref{thm:gmnl-from-bipartite-ent}
as it applies to a star network to show that this network is also
GMNL.

Let $i\in[n-1]$ and consider the $i$th copy of $\ket{\Psi}.$ Suppose
each party $B_{j},$ $j\neq i,$ performs a local, projective measurement
onto a basis $\left\{ \ket{b_{j}}\right\} _{b_{j}=0}^{d-1}.$ We pick
the computational basis on each party's Hilbert space to be such that
the measurement performed by the parties $B_{j},$ $j\neq i,$ leave
Alice and $B_{i}$ in state $|\phi_{\overrightarrow{b}}\rangle_{AB_{i}},$
where $\overrightarrow{b}=b_{1}...b_{i-1}b_{i+1}...b_{n-1}$ denotes
the output obtained by the parties $B_{j},$ $j\neq i$ (we briefly
omit the script $i$ referring to the copy of the state, for readability).
This means that we can write the state $\ket{\Psi}$ as 
\begin{equation}
\ket{\Psi}=\sum_{\overrightarrow{b}}\lambda_{\overrightarrow{b}}\ket{\phi_{\overrightarrow{b}}}_{AB_{i}}\ket{\overrightarrow{b}}_{B_{1}...B_{i-1}B_{i+1}...B_{n-1}.}
\end{equation}
Ref. \cite{popescu_generic_1992}, whose proof was completed in Ref.
\cite{gachechiladze_completing_2017}, showed that there always exist
measurements (i.e. bases) $\left\{ \ket{b_{j}}\right\} _{b_{j}=0}^{d-1}$
such that $\ket{\phi_{\overrightarrow{b}}}_{AB_{i}}$ is entangled
for a certain output $\overrightarrow{b}$. We now show that this
opens up only two possibilities for each $i$: either there exists
an output such that $\ket{\phi_{\overrightarrow{b}}}_{AB_{i}}$ is
less-than-maximally entangled, or for all outputs $\overrightarrow{b}$,
$\ket{\phi_{\overrightarrow{b}}}_{AB_{i}}$ is maximally entangled.
Indeed, the only option left to discard is one where, for some $\overrightarrow{b}=\overrightarrow{b^{*}},$
$\ket{\phi_{\overrightarrow{b^{*}}}}_{AB_{i}}$ is maximally entangled,
and for some other $\overrightarrow{b}=\overrightarrow{b^{**}},$
$\ket{\phi_{\overrightarrow{b^{**}}}}_{AB_{i}}$ is separable. But
it is easy to see, by using a continuity argument, that in this case
the bases $\left\{ \ket{b_{j}}\right\} _{b_{j}=0}^{d-1}$ can be modified
so that there exists one output for which $AB_{i}$ are projected
onto a less-than-maximally entangled state: it suffices to consider
one (normalised) element of the measurement basis to be $c_{0}\ket{b_{j}^{*}}+c_{1}\ket{b_{j}^{**}}$
for some values $c_{0},c_{1}\in\mathbb{C},$ for each $j.$

Therefore, we consider the following cases: 
\begin{enumerate}
\item for all $i\in[n-1],$ there exists an input and output for each $B_{j},j\neq i$
such that $\ket{\phi_{i}}_{AB_{i}}$ is less-than-maximally entangled;\label{enu:Hardy-copies} 
\item for all $i\in[n-1],$ there exists an input for each $B_{j},j\neq i$
such that $\ket{\phi_{i}}_{AB_{i}}$ is maximally entangled for all
outputs;\label{enu:maxent-copies} 
\item there exist $i,k\in[n-1]$ such that $\ket{\phi_{i}}_{AB_{i}}$ is
as in Case \ref{enu:maxent-copies} and $\ket{\phi_{k}}_{AB_{k}}$
is as in Case \ref{enu:Hardy-copies}.\label{enu:mix-copies} 
\end{enumerate}
\textbf{Case \ref{enu:Hardy-copies}:} let $i\in[n-1].$ Suppose parties
$\{B_{j}\}_{j\neq i}$ perform the measurements explained above that
leave Alice and $B_{i}$ less-than-maximally entangled. Then, Alice
and $B_{i}$ can perform local measurements on the resulting state
to satisfy Hardy's paradox. We will modify the inequality in Theorem
\ref{thm:gmnl-from-bipartite-ent} and show that these measurements
on $\ket{\Psi}^{\otimes(n-1)}$ give a distribution which violates
the inequality.

To modify the inequality in Theorem \ref{thm:gmnl-from-bipartite-ent},
we import the same strategy to lift inequality (\ref{eq:Iorig}) to
$n$ parties, each with $2^{n-1}$ inputs and outputs. We want $I^{AB_{i}}$
to detect bipartite nonlocality between Alice's $i$th particle and
$B_{i}$'s $i$th particle, that is, nonlocality in $a_{i}b_{i}^{i}|x_{i}y_{i}^{i}.$
Therefore, for each $i$ we now need to fix all other inputs $x_{j},y_{i}^{j},y_{j}^{j}$
and add over all other outputs $a_{j},b_{i}^{j},b_{j}^{j},$ $j\neq i,$
so that 
\begin{equation}
\begin{aligned}I^{AB_{i}}=\sum_{a_{\bar{i}},b_{i}^{\bar{i}},b_{\bar{i}}^{\bar{i}}=0,1} & \left(P(0_{i}a_{\bar{i}}\,,0_{i}^{i}b_{i}^{\bar{i}}\,,0_{\bar{i}}^{i}b_{\bar{i}}^{\bar{i}}\,|0_{i}0_{\bar{i}}\,,0_{i}^{i}0_{i}^{\bar{i}}\,,0_{\bar{i}}^{i}0_{\bar{i}}^{\bar{i}})-P(0_{i}a_{\bar{i}}\,,1_{i}^{i}b_{i}^{\bar{i}}\,,0_{\bar{i}}^{i}b_{\bar{i}}^{\bar{i}}\,|0_{i}0_{\bar{i}}\,,1_{i}^{i}0_{i}^{\bar{i}}\,,0_{\bar{i}}^{i}0_{\bar{i}}^{\bar{i}})\right.\\
 & \left.-P(1_{i}a_{\bar{i}}\,,0_{i}^{i}b_{i}^{\bar{i}}\,,0_{\bar{i}}^{i}b_{\bar{i}}^{\bar{i}}\,|1_{i}0_{\bar{i}}\,,0_{i}^{i}0_{i}^{\bar{i}}\,,0_{\bar{i}}^{i}0_{\bar{i}}^{\bar{i}})-P(0_{i}a_{\bar{i}}\,,0_{i}^{i}b_{i}^{\bar{i}}\,,0_{\bar{i}}^{i}b_{\bar{i}}^{\bar{i}}\,|1_{i}0_{\bar{i}}\,,1_{i}^{i}0_{i}^{\bar{i}}\,,0_{\bar{i}}^{i}0_{\bar{i}}^{\bar{i}})\right)\,,
\end{aligned}
\end{equation}
where the outputs in the first term are denoted as follows: $0_{i}a_{\bar{i}}$
denotes output $\alpha=a_{1}...0_{i}...a_{n-1},\:$ $0_{i}^{i}b_{i}^{\bar{i}}$
denotes output $\beta_{i}=b_{i}^{1}...0_{i}^{i}...b_{i}^{n-1},\:$
and $0_{\bar{i}}^{i}b_{\bar{i}}^{\bar{i}}$ denotes output $\beta_{j}=b_{j}^{1}...0_{j}^{i}...b_{j}^{n-1}$
for \emph{all} $j\neq i.$ Inputs are denoted similarly, and the notation
is similar for the other three terms. Then, the inequality 
\begin{equation}
I_{n}=\sum_{i=1}^{n-1}I^{AB_{i}}+P(\overrightarrow{0},\overrightarrow{0}|\overrightarrow{0},\overrightarrow{0})-\sum_{i=1}^{n-1}\sum_{a_{\bar{i}},b_{i}^{\bar{i}},b_{\bar{i}}^{\bar{i}}=0,1}P(0_{i}a_{\bar{i}}\,,0_{i}^{i}b_{i}^{\bar{i}}\,,0_{\bar{i}}^{i}b_{\bar{i}}^{\bar{i}}\,|0_{i}0_{\bar{i}}\,,0_{i}^{i}0_{i}^{\bar{i}}\,,0_{\bar{i}}^{i}0_{\bar{i}}^{\bar{i}})\leq0
\end{equation}
is a GMNL inequality, by the same reasoning as in Theorem \ref{thm:gmnl-from-bipartite-ent}.

Evaluating the inequality on the distribution (\ref{eq:Pnpartite-fromGME}),
we find again that each term simplifies. For each $i$ we get, for
example, 

\begin{equation}
\begin{aligned}\sum_{a_{\bar{i}},b_{i}^{\bar{i}},b_{\bar{i}}^{\bar{i}}=0,1}P & (0_{i}a_{\bar{i}}\,,1_{i}^{i}b_{i}^{\bar{i}}\,,0_{\bar{i}}^{i}b_{\bar{i}}^{\bar{i}}\,|0_{i}0_{\bar{i}}\,,1_{i}^{i}0_{i}^{\bar{i}}\,,0_{\bar{i}}^{i}0_{\bar{i}}^{\bar{i}})\\
 & =P_{i}(0_{i}1_{i}^{i}0_{\bar{i}}^{i}|0_{i}1_{i}^{i}0_{\bar{i}}^{i})\prod_{\substack{j=1\\
j\neq i
}
}^{n-1}\sum_{\substack{a_{j},b_{k}^{j}=0,1\\
k\neq j
}
}P_{j}(a_{j}b_{1}^{j}...b_{j-1}^{j}b_{j+1}^{j}...b_{n-1}^{j}\,|0_{j}0_{1}^{j}...0_{j-1}^{j}0_{j+1}^{j}...0_{n-1}^{j})\\
 & =P_{i}(0_{i}1_{i}^{i}0_{\bar{i}}^{i}|0_{i}1_{i}^{i}0_{\bar{i}}^{i})
\end{aligned}
\end{equation}
and, similarly, 
\begin{equation}
\begin{aligned}\sum_{a_{\bar{i}},b_{i}^{\bar{i}},b_{\bar{i}}^{\bar{i}}=0,1}P(1_{i}a_{\bar{i}}\,,0_{i}^{i}b_{i}^{\bar{i}}\,,0_{\bar{i}}^{i}b_{\bar{i}}^{\bar{i}}\,|1_{i}0_{\bar{i}}\,,0_{i}^{i}0_{i}^{\bar{i}}\,,0_{\bar{i}}^{i}0_{\bar{i}}^{\bar{i}}) & =P_{i}(1_{i}0_{i}^{i}0_{\bar{i}}^{i}|1_{i}0_{i}^{i}0_{\bar{i}}^{i})\,;\\
\sum_{a_{\bar{i}},b_{i}^{\bar{i}},b_{\bar{i}}^{\bar{i}}=0,1}P(0_{i}a_{\bar{i}}\,,0_{i}^{i}b_{i}^{\bar{i}}\,,0_{\bar{i}}^{i}b_{\bar{i}}^{\bar{i}}\,|1_{i}0_{\bar{i}}\,,1_{i}^{i}0_{i}^{\bar{i}}\,,0_{\bar{i}}^{i}0_{\bar{i}}^{\bar{i}}) & =P_{i}(0_{i}0_{i}^{i}0_{\bar{i}}^{i}|1_{i}1_{i}^{i}0_{\bar{i}}^{i}).
\end{aligned}
\end{equation}
Also, 
\begin{equation}
P(\overrightarrow{0},\overrightarrow{0}|\overrightarrow{0},\overrightarrow{0})=\prod_{i=1}^{n-1}P_{i}(0_{i}0_{i}^{i}0_{\bar{i}}^{i}|0_{i}0_{i}^{i}0_{\bar{i}}^{i})\,.
\end{equation}
Now each $P_{i}$ in equation (\ref{eq:Pnpartite-fromGME}) arises
from measurements by $\{B_{j}\}_{j\neq i}$ to create a less-than-maximally
entangled state between Alice and $B_{i},$ who can then choose measurements
to satisfy Hardy's paradox. Hence all terms are zero except $P(\overrightarrow{0},\overrightarrow{0}|\overrightarrow{0},\overrightarrow{0})>0,$
and so the inequality is violated. Therefore, $\ket{\Psi}^{\otimes(n-1)}$
is GMNL.

\textbf{Case \ref{enu:maxent-copies}:} we assumed that, for all $i\in[n-1],$
there exist local measurements on $\ket{\Psi}$ for parties $\left\{ B_{j}\right\} _{j\neq i}$
that, for \emph{all} outcomes, create a maximally entangled state
$\ket{\phi_{i}}_{AB_{i}}$ shared between Alice and $B_{i}.$ Since
all bipartitions can be expressed as $A|B_{i}$ for some $i,$ we
find that $\ket{\Psi}$ meets the requirements of Theorem 2 in \cite{almeida_multipartite_2010},
and so $\ket{\Psi}$ is GMNL. That is, one copy of the shared state
$\ket{\Psi}$ is already GMNL, and therefore so is $\ket{\Psi}^{\otimes(n-1)}.$

\textbf{Case \ref{enu:mix-copies}:} assume wlog that the state $\ket{\phi_{i}}_{AB_{i}}$
is less-than-maximally entangled for $i=1,...,K$ and maximally entangled
for $i=K+1,...,n-1.$ We will show that $\ket{\Psi}^{\otimes(K+1)}$
is GMNL, which implies that $\ket{\Psi}^{\otimes(n-1)}$ is so too.

It will be useful to classify bipartitions $M|\overline{M}$ like
in Theorem \ref{thm:gmnl-from-bipartite-ent}. We will always assume
that Alice belongs to $M$ in order not to duplicate the bipartitions.
Let $S_{\leq K}$ be the set of bipartitions $M|\overline{M}$ (indexed
by $M$) which are crossed by an edge $j\leq K,$ i.e., where $\overline{M}$
contains at least one index $j\in\{1,...,K\},$ and $T_{\leq K}$
be its complement, i.e. the set of bipartitions where $\overline{M}$
contains only indices $j\in\{K+1,...,n-1\}.$ Similarly, $S_{>K}$
(respectively, $T_{>K}$) is the set of bipartitions which are (not)
crossed by an edge $j>K.$ That is, in $S_{>K},$ there is some $j\in\{K+1,...,n-1\}$
which belongs to $\overline{M},$ while in $T_{>K},$ $\overline{M}$
contains only indices $j\in\{1,...,K\}.$

For each $i=1,...,K,$ parties $AB_{i}$ can perform measurements
on their shared state $\ket{\phi_{i}}_{AB_{i}}$ which, together with
the measurements of parties $\{B_{j}\}_{j\neq i}$ that projected
$\ket{\Psi}$ onto $\ket{\phi_{i}}_{AB_{i}},$ give rise to a distribution
\begin{equation}
P_{i}(a_{i}b_{1}^{i}...b_{n-1}^{i}|x_{i}y_{1}^{i}...y_{n-1}^{i})\label{eq:PiHardy-copies}
\end{equation}
which satisfies Hardy's paradox when post-selected on the inputs and
outputs of parties $\{B_{j}\}_{j\neq i}.$ Then, the distribution
arising from the first $K$ copies of $\ket{\Psi}$ is 
\begin{equation}
P_{H}(\{a_{i}\}_{i\leq K}\{b_{j}^{i}\}_{i\leq K,j\in[n-1]}|\{x_{i}\}_{i\leq K}\{y_{j}^{i}\}_{i\leq K,j\in[n-1]})=\prod_{i=1}^{K}P_{i}(a_{i}b_{1}^{i}...b_{n-1}^{i}|x_{i}y_{1}^{i}...y_{n-1}^{i}),\label{eq:PHardy-copies}
\end{equation}
with $P_{i}$ as in equation (\ref{eq:PiHardy-copies}). This distribution
is similar to that in Case \ref{enu:Hardy-copies} when post-selected
on the inputs and outputs of parties $\{B_{j}\}_{j>K}.$ More precisely,
by the nonsignalling condition, we have 
\begin{equation}
\begin{aligned}P_{H}(\{a_{i}\}_{i\leq K} & \{b_{j}^{i}\}_{i\leq K,j\leq K}\{b_{j}^{i}=0_{j}^{i}\}_{i\leq K,j>K}|\{x_{i}\}_{i\leq K}\{y_{j}^{i}\}_{i\leq K,j\leq K}\{y_{j}^{i}=0_{j}^{i}\}_{i\leq K,j>K})=\\
 & P_{AB_{1}...B_{K}}(\{a_{i}\}_{i\leq K}\{b_{j}^{i}\}_{i\leq K,j\leq K}|\{x_{i}\}_{i\leq K}\{y_{j}^{i}\}_{i\leq K,j\leq K},\{b_{j}^{i}=0_{j}^{i}\}_{i\leq K,j>K},\{y_{j}^{i}=0_{j}^{i}\}_{i\leq K,j>K})\\
 & \times P_{B_{K+1}...B_{n-1}}(\{b_{j}^{i}=0_{j}^{i}\}_{i\leq K,j>K}|\{y_{j}^{i}=0_{j}^{i}\}_{i\leq K,j>K}),
\end{aligned}
\label{eq:PHcondtl-copies}
\end{equation}
where by Case \ref{enu:Hardy-copies} we know that $P_{AB_{1}...B_{K}}$
is GMNL in its parties. Then, $P_{H}$ must be $(K+1)$-way nonlocal
(i.e., GMNL when restricted to parties $A,B_{1},...,B_{K}$). Indeed,
if this were not the case, by equation (\ref{eq:PHcondtl-copies})
we could obtain a decomposition of the form (\ref{eq:BLdistrNpartite})
for $P_{AB_{1}...B_{K}},$ which would contradict the fact that this
distribution is GMNL.

Therefore, there exists an $\varepsilon>0$ such that any EPR2 decomposition
of $P_{H}$ as 
\begin{equation}
P_{H}=\sum_{M}p_{L,H}^{M}P_{L,H}^{M}+p_{NS,H}P_{NS,H}
\end{equation}
we have that the terms where $P_{L,H}^{M}$ is local across a bipartition
such that $M\in S_{\leq K}$ satisfy 
\begin{equation}
\sum_{M\in S_{\leq K}}p_{L,H}^{M}\leq1-\varepsilon.\label{eq:pHlocal-copies}
\end{equation}

On the other hand, $\ket{\Psi}$ satisfies Theorem 1 in Ref. \cite{almeida_multipartite_2010}
for all bipartitions $A|B_{i}$ for $i=K+1,...,n-1,$ hence it is
fully nonlocal across all such bipartitions. This means that, for
any $\delta_{i}>0,$ there exist local measurements on $\ket{\Psi}$
(which depend on $i$) that lead to a distribution 
\begin{equation}
P_{+}(ab_{1}...b_{n-1}|xy_{1}...y_{n-1})\label{eq:P+-copies}
\end{equation}
such that any bipartite EPR2 decomposition across a bipartition $A|B_{i},$
for $i=K+1,...,n-1,$ 
\begin{equation}
P_{+}=p_{L,+}^{A|B_{i}}P_{L,+}^{A|B_{i}}+(1-p_{L,+}^{A|B_{i}})P_{NS,+}^{A|B_{i}}
\end{equation}
satisfies 
\begin{equation}
p_{L,+}^{A|B_{i}}<\delta_{i}.\label{eq:P+localdeltai-copies}
\end{equation}
Thus, considering the possibility of implementing all the above measurements
for each \emph{i} leads to a distribution of the form (\ref{eq:P+-copies})
in which equation (\ref{eq:P+localdeltai-copies}) holds for every
$i=K+1,...,n-1.$

Therefore, given the $\varepsilon$ above, the parties can choose
suitable $\delta_{i}$ to bound the bipartitely local components and
hence ensure that any multipartite EPR2 decomposition of $P_{+},$
\begin{equation}
P_{+}=\sum_{M}p_{L,+}^{M}P_{L,+}^{M}+p_{NS,+}P_{NS,+}
\end{equation}
is such that the terms where $P_{L,+}^{M}$ is local across a bipartition
such that $M\in S_{>K}$ satisfy 
\begin{equation}
\sum_{M\in S_{>K}}p_{L,+}^{M}<\varepsilon.\label{eq:Plocal-maxent-copies}
\end{equation}

Since we only need to consider $(K+1)$ copies of the state, we denote
the inputs and outputs of Alice and each party $B_{j},$ $j\in[n-1]$
by $\chi=x_{1}...x_{K+1},$ $\upsilon_{j}=y_{j}^{1}...y_{j}^{K+1};$
$\alpha=a_{1}...a_{K+1},$ $\beta_{j}=b_{j}^{1}...b_{j}^{K+1}$ respectively.
Then, the global distribution obtained from $\ket{\Psi}^{\otimes(K+1)}$
is 
\begin{equation}
\begin{aligned}P(\alpha\beta_{1}... & \beta_{n-1}|\chi\upsilon_{1}...\upsilon_{n-1})=\\
 & P_{H}(\{a_{i}\}_{i\leq K}\{b_{j}^{i}\}_{i\leq K,j\in[n-1]}|\{x_{i}\}_{i\leq K}\{y_{j}^{i}\}_{i\leq K,j\in[n-1]})\\
 & \times P_{+}(a_{K+1}b_{1}^{K+1}...b_{n-1}^{K+1}|x_{K+1}y_{1}^{K+1}...y_{n-1}^{K+1}),
\end{aligned}
\label{eq:P+PH-copies}
\end{equation}
where $P_{H}$ comes from equation (\ref{eq:PHardy-copies}) and the
EPR2 components of $P_{H},P_{+}$ are as per equations (\ref{eq:pHlocal-copies}),
(\ref{eq:Plocal-maxent-copies}).

We now follow a similar strategy to that in Theorem \ref{thm:gmnl-from-bipartite-ent}.
To prove that the global distribution $P$ is GMNL, as is our goal,
we assume the converse, and we derive a contradiction from the nonlocality
properties of $P_{H}$ and $P_{+}.$ Assuming $P$ is not GMNL, we
can express the distribution as 
\begin{equation}
P(\alpha\beta_{1}...\beta_{n-1}|\chi\upsilon_{1}...\upsilon_{n-1})=\sum_{\lambda,M}p_{L}^{M}(\lambda)P_{M}(\alpha\{\beta_{j}\}_{j\in M}|\chi\{\upsilon_{j}\}_{j\in M},\lambda)P_{\overline{M}}(\{\beta_{j}\}_{j\in\overline{M}}|\{\upsilon_{j}\}_{j\in\overline{M}},\lambda),\label{eq:PnotGMNL-copies}
\end{equation}
where 
\begin{equation}
\sum_{\lambda,M}p_{L}^{M}(\lambda)=1,
\end{equation}
for \emph{each} $\alpha,\beta_{j},\chi,\upsilon_{j},j=1,...,n-1,$
where we recall that each $\beta_{j}=b_{j}^{1}...b_{j}^{K+1}$ and
similarly for $\upsilon_{j}.$

Now, if we sum equation (\ref{eq:PnotGMNL-copies}) over $a_{i},b_{j}^{i}$
for $i=1,...,K$ and $j=1,...,n-1$ (that is, we sum over the $i$th
digit, $i\leq K,$ of Alice and all parties $B_{j}$), we obtain $P_{+}$
on the left-hand side, from equation (\ref{eq:P+PH-copies}). On the
right-hand side, we obtain, for each $M,$ \footnote{Note that, once more, the distribution obtained by summing over only
some of the digits of a party's output still depends on the whole
input as it may be signalling in the different digits of the party's
input. However, as in Theorem \ref{thm:gmnl-from-bipartite-ent},
these extra inputs can be fixed to an arbitrary value as the left-hand
side is independent of them.} 
\begin{equation}
\sum_{\lambda}p_{L}^{M}(\lambda)P_{M}(a_{K+1}\{b_{j}^{K+1}\}_{j\in M}|\chi\{\upsilon_{j}\}_{j\in M},\lambda)P_{\overline{M}}(\{b_{j}^{K+1}\}_{j\in\overline{M}}|\{\upsilon_{j}\}_{j\in\overline{M}},\lambda)\,,\label{eq:PMMbar-P+}
\end{equation}
whose sum turns out to form an EPR2 decomposition of $P_{+}.$ Indeed,
local terms are given by bipartitions such that $M\in S_{>K},$ as
in these terms there is some digit $b_{j}^{K+1}$ with $j>K$ appearing
in $P_{\overline{M}},$ thus they are local across $A|B_{j}$ for
some $j>K.$ The nonlocal terms are given by bipartitions such that
$M\in T_{>K}$ (since all terms are nonsignalling). Therefore, the
choice of measurements which generated $P_{+}$ ensures (by equation
(\ref{eq:Plocal-maxent-copies})) that 
\begin{equation}
\sum_{\lambda,M\in S_{>K}}p_{L}^{M}(\lambda)<\varepsilon\label{eq:sumS>K}
\end{equation}
and hence 
\begin{equation}
\sum_{\lambda,M\in T_{>K}}p_{L}^{M}(\lambda)>1-\varepsilon.\label{eq:sumT>K}
\end{equation}

Going back now to equation (\ref{eq:PnotGMNL-copies}), we sum over
$a_{K+1},b_{j}^{K+1}$ for $j=1,...,n-1$ (that is, we sum over the
$(K+1)$th digit of Alice and all parties $B_{j}$). Then, we obtain
$P_{H}$ on the left-hand side, from equation (\ref{eq:P+PH-copies}).
On the right-hand side, we obtain for each $M$, 
\begin{equation}
\sum_{\lambda}p_{L}^{M}(\lambda)P_{M}(\{a_{i}\}_{i\leq K}\{b_{j}^{i}\}_{i\leq K,j\in M}|\chi\{\upsilon_{j}\}_{j\in M},\lambda)P_{\overline{M}}(\{b_{j}^{i}\}_{i\leq K,j\in\overline{M}}|\{\upsilon_{j}\}_{j\in\overline{M}},\lambda)\,,
\end{equation}
whose sum over $M$ gives an EPR2 decomposition of $P_{H}.$ This
time, $S_{\leq K}$ will give the local terms, as $P_{\overline{M}}$
will contain at least some digit $b_{j}^{j}$ for $j\leq K,$ while
$T_{\leq K}$ will give the nonlocal terms. By equation (\ref{eq:pHlocal-copies}),
our choice of $\varepsilon$ implies that 
\begin{equation}
\sum_{\lambda,M\in S_{\leq K}}p_{L}^{M}(\lambda)\leq1-\varepsilon.\label{eq:sumSleqK}
\end{equation}

Now, any bipartition in $T_{>K}$ is such that all $j\in\{K+1,...,n-1\}$
are in $M.$ Hence, there must be some $j\leq K$ in $\overline{M},$
otherwise $\overline{M}$ would be empty. Therefore, $P_{\overline{M}}$
always contains at least one digit $b_{j}^{j}$ for some $j\leq K,$
and so terms where $M\in T_{>K}$ are local across the bipartition
$A|B_{j}$ for some $j\leq K.$ That is, $T_{>K}\subseteq S_{\leq K}.$

This means that equation (\ref{eq:sumSleqK}) also holds if the sum
is over $T_{>K},$ but this is in contradiction with equation (\ref{eq:sumT>K}).
\end{proof}


\bibliography{supmat,gmnl}
\end{document}